\newtheorem{theorem}{Theorem}[section]
\newtheorem{lemma}[theorem]{Lemma}
\newtheorem{proposition}[theorem]{Proposition}
\newtheorem{definition}{Definition}[section]
\newcommand{\R}{\mathbb{R}}
\newcommand{\<}{\langle}
\renewcommand{\>}{\rangle}
\renewcommand{\P}{\operatorname{\mathbb{P}}}
\newcommand{\E}{\operatorname{\mathbb{E}}}
\newcommand{\RO}{{\cal R}_{\Omega}}
\newcommand{\ROs}[1]{{\cal R}_{\Omega_{#1}}}
\newcommand{\PT}{{\cal P}_T}
\newcommand{\PTc}{{\cal P}_{T^\perp}}
\newcommand{\Id}{\bm I}
\newcommand{\OpId}{\mathcal{I}}
\newcommand{\eab}{\vct{e}_{a}\vct{e}_{b}^*}
\newcommand{\ecd}{\vct{e}_{c}\vct{e}_{d}^*}
\newcommand{\eabk}{\vct{e}_{a_k}\vct{e}_{b_k}^*}
\newcommand{\vct}[1]{\bm{#1}}
\newcommand{\mtx}[1]{\bm{#1}}
\newcommand{\trace}{\operatorname{Tr}}
\newcommand{\lspan}[1]{\operatorname{span}{#1}}
\numberwithin{equation}{section}
\def \endprf{\hfill {\vrule height6pt width6pt depth0pt}\medskip}
\newenvironment{proof}{\noindent {\bf Proof} }{\endprf\par}
\newcommand{\eq}[1]{(\ref{eq:#1})}
\title{A Simpler Approach to Matrix Completion}
\author{Benjamin Recht\\
  \vspace{-.1cm}\\
{\normalsize Department of Computer Sciences, University of Wisconsin-Madison}\\
 {\normalsize 1210 W Dayton St, Madison, WI 53706}\\
 {\normalsize email: {\tt brecht@cs.wisc.edu}}}
\date{October 2009}
\begin{document}

\maketitle

\vspace{-0.3in}

\begin{abstract}
This paper provides the best bounds to date on the number of randomly sampled entries required to reconstruct an unknown low rank matrix.  These results improve on prior work by Cand\`es and Recht~\cite{CandesRecht08}, Cand\`es and Tao~\cite{CandesTao09}, and Keshavan, Montanari, and Oh~\cite{Keshavan09}.  The reconstruction is accomplished by minimizing the nuclear norm, or sum of the singular values, of the hidden matrix subject to agreement with the provided entries. If the underlying matrix satisfies a certain incoherence condition, then the number of entries required is equal to a quadratic logarithmic factor times the number of parameters in the singular value decomposition.  The proof of this assertion is short, self contained, and uses very elementary analysis.  The novel techniques herein are based on recent work in quantum information theory.
\end{abstract}

{\bf Keywords.}  Matrix completion, low-rank matrices, convex
optimization, nuclear norm minimization,
random matrices, operator Chernoff bound,
compressed sensing.

\section{Introduction}
\label{sec:intro}

Recovering a low rank matrix from a given subset of its entries is a recurring problem in collaborative filtering~\cite{Rennie05}, dimensionality reduction~\cite{Linial95, Weinberger06}, and multi-class learning~\cite{Argyriou08,Obozinski09}.  While a variety of heuristics have been developed across many disciplines, the general problem of finding the lowest rank matrix satisfying equality constraints is NP-hard. All known algorithms which can compute the lowest rank solution for all instances require time at least exponential in the dimensions of the matrix in both theory and practice \cite{Grigoriev84}.
 
In sharp contrast to such worst case pessimism, Cand\`es and Recht showed that most low rank matrices could be recovered from  most sufficiently large sets of entries by computing the matrix of minimum \emph{nuclear norm} that agreed with the provided entries~\cite{CandesRecht08}, and furthermore the revealed set of entries could comprise a vanishing fraction of the entire matrix.  The nuclear norm is equal to the sum of the singular values of a matrix and is the best convex lower bound of the rank function on the set of matrices whose singular values are all bounded by $1$. The intuition behind this heuristic is that whereas the rank function counts the number of nonvanishing singular values, the nuclear norm sums their amplitude, much like how the $\ell_1$ norm is a useful surrogate for counting the number of nonzeros in a vector.   Moreover, the nuclear norm can be minimized subject to equality constraints via semidefinite programming.

Nuclear norm minimization had long been observed to produce very low-rank solutions in practice (see, for example~\cite{Beck98,FazelThesis,Fazel01,SrebroThesis,Mesbahi97}), but only very recently was there any theoretical basis for when it produced the minimum rank solution.  The first paper to provide such foundations was~\cite{Recht07}, where Recht, Fazel, and Parrilo developed probabilistic techniques to study average case behavior and showed that the nuclear norm heuristic could solve most instances of the rank minimization problem assuming the number of linear constraints was sufficiently large. The results in \cite{Recht07}~inspired a groundswell of interest in theoretical guarantees for rank minimization, and these results lay the foundation for~\cite{CandesRecht08}.  Cand\`es and Recht's bounds were subsequently improved by Cand\`es and Tao~\cite{CandesTao09} and Keshavan, Montanari, and Oh~\cite{Keshavan09} to show that one could, in special cases, reconstruct a low-rank matrix by observing a set of entries of size at most a polylogarithmic factor larger than the intrinsic dimension of the variety of rank $r$ matrices.   

This paper sharpens the results in~\cite{CandesTao09,Keshavan09} to provide a bound on the number of entries required to reconstruct a low rank matrix which is optimal up to a small numerical constant and one logarithmic factor.  The main theorem makes minimal assumptions about the low rank matrix of interest.  Moreover, the proof is very short and relies on mostly elementary analysis.  

In order to precisely state the main result, we need one definition. Cand\`es and Recht observed that it is impossible to recover a matrix which is equal to zero in nearly all of its entries unless all of the entries of the matrix are observed (consider, for example, the rank one matrix which is equal to $1$ in one entry and zeros everywhere else). In other words, the matrix cannot be mostly equal to zero on the observed entries.  This motivated the following definition
\begin{definition}
\label{def:coherence} Let $U$ be a subspace of $\mathbb{R}^n$ of
dimension $r$ and $\mtx{P}_U$ be the orthogonal projection onto $U$.
Then the \emph{coherence} of $U$ (vis-\`a-vis the standard basis
$(\vct{e}_i)$) is defined to be
\begin{equation}
\label{eq:coherence} \mu(U) \equiv \frac{n}{r} \max_{1 \le i \le n}
\|\mtx{P}_U
  \vct{e}_i\|^2.
\end{equation}
\end{definition}
 Note that for any subspace, the smallest $\mu(U)$ can be is $1$, achieved, for example, if $U$ is spanned by vectors whose entries all have magnitude $1/\sqrt{n}$. The largest possible value for $\mu(U)$ is $n/r$ which would correspond to any subspace that contains a standard basis element. If a matrix has row and column spaces with low coherence, then each entry can be expected to provide about the same amount of information.

Recall that the \emph{nuclear norm} of an $n_1\times n_2$ matrix $\mtx{X}$ is the sum of the singular values of $\mtx{X}$,  $\|\mtx{X}\|_* = \sum_{k = 1}^{\min\{n_1,n_2\}} \sigma_k(\mtx{X})$, where, here and below, $\sigma_k(\mtx{X})$ denotes the $k$th largest singular value of $\mtx{X}$.  The main result of this paper is the following
\begin{theorem} \label{thm:main} 
Let $\mtx{M}$ be an $n_1 \times n_2$ matrix of  rank $r$ with singular value decomposition $\mtx{U}\mtx{\Sigma}\mtx{V}^*$.   Without loss of generality, impose the conventions $n_1\leq n_2$, $\mtx{\Sigma}$ is $r\times r$, $\mtx{U}$ is $n_1 \times r$ and $\mtx{V}$ is $n_2 \times r$. Assume that
\begin{description}
\item[{A0}] The row and column spaces have coherences bounded above by some positive $\mu_0$.
\item[{A1}] The matrix $\mtx{U}\mtx{V}^*$ has a maximum entry bounded by $\mu_1
  \sqrt{r/(n_1 n_2)}$ in absolute value for some positive $\mu_1$.
\end{description}
Suppose $m$ entries of $\mtx{M}$ are observed with locations sampled uniformly at random. Then  if
\begin{equation}\label{eq:main3}
  m \geq \, 32 \max\{\mu_1^2, \mu_0\} \, r(n_1+n_2) \,\,\beta \log^2 (2n_2)
\end{equation}
for some $\beta>1$, the minimizer to the problem 
\begin{equation}
 \label{eq:sdp}
  \begin{array}{ll}
\textrm{minimize}   & \quad \|\mtx{X}\|_*\\
\textrm{subject to} & \quad   X_{ij} = M_{ij} \quad (i,j) \in \Omega.
 \end{array}
\end{equation}
is unique and equal to $\mtx{M}$ with probability at least $1-6\log(n_2) (n_1+n_2)^{2-2\beta}-n_2^{2-2\beta^{1/2}}$.  
\end{theorem}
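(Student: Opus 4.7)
The plan is to prove Theorem \ref{thm:main} via the dual-certificate approach to nuclear norm minimization, combined with matrix-valued concentration inequalities in the spirit of the operator Chernoff bound. First I would reduce uniform sampling of $m$ entries to independent Bernoulli sampling of each entry with probability $p = m/(n_1 n_2)$, via the standard monotonicity argument that the failure probability under uniform sampling is at most twice the Bernoulli failure probability. Let $T$ denote the tangent space to the rank-$r$ variety at $\mtx{M}$, namely $T = \{\mtx{U}\mtx{X}^* + \mtx{Y}\mtx{V}^* : \mtx{X}\in\R^{n_2\times r},\,\mtx{Y}\in\R^{n_1\times r}\}$. Standard subgradient calculus for the nuclear norm reduces uniqueness of $\mtx{M}$ as the minimizer of \eq{sdp} to two sufficient conditions: (i) the sampling operator $\PO$ is well conditioned on $T$, and (ii) there exists a dual certificate $\mtx{Y}$ supported on $\Omega$ with $\PT\mtx{Y} = \mtx{U}\mtx{V}^*$ and $\|\PTc\mtx{Y}\| < 1$ in operator norm.

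For step (i), I would write $\PT\PO\PT$ as a sum of independent rank-one random operators indexed by the sampled entries and apply a noncommutative (operator) Chernoff bound. The coherence assumption A0 gives the pointwise estimate $\|\PT(\eab)\|_F^2 \leq 2\mu_0 r/n_1$, which simultaneously controls the maximum and variance of the summands. This should yield $\|\PT\PO\PT - p\,\PT\| \leq 1/2$ with failure probability of order $(n_1+n_2)^{2-2\beta}$ whenever $m$ satisfies \eq{main3}, which in particular makes $\PO$ injective on $T$.

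For step (ii), I would use Gross's golfing scheme. Partition $\Omega$ into $\ell = \lceil c\log n_2\rceil$ independent Bernoulli batches $\Omega_1,\dots,\Omega_\ell$ of sampling rate $q \approx p/\ell$, set $\mtx{W}_0 = \mtx{U}\mtx{V}^*$, and iterate
\[
  \mtx{Y}_k = \mtx{Y}_{k-1} + q^{-1}\POs{k}(\mtx{W}_{k-1}), \qquad \mtx{W}_k = \mtx{U}\mtx{V}^* - \PT\mtx{Y}_k.
\]
Applying the concentration estimate of step (i) to each independent batch forces $\|\mtx{W}_k\|_F \leq 2^{-k}\|\mtx{U}\mtx{V}^*\|_F$, so after $\ell$ steps $\PT\mtx{Y}_\ell$ is exponentially close to $\mtx{U}\mtx{V}^*$; a small correction inside $T$, using the well conditioning proved in step (i), converts $\mtx{Y}_\ell$ into an exact certificate.

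The main obstacle will be controlling $\|\PTc\mtx{Y}_\ell\|$ in operator norm. Each increment contributes $\|\PTc q^{-1}\POs{k}(\mtx{W}_{k-1})\|$, which I would bound via a matrix Bernstein inequality conditional on $\mtx{W}_{k-1}$. The essential input is a uniform bound on $\|\mtx{W}_{k-1}\|_\infty$: assumption A1 supplies the base case $\|\mtx{W}_0\|_\infty \leq \mu_1\sqrt{r/(n_1 n_2)}$, and an inductive argument — using a separate scalar Bernstein bound on entries of $\PT\POs{k}(\mtx{W}_{k-1})$ — propagates a geometrically decaying $\ell_\infty$ bound through the iterations. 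Summing the resulting geometric series of operator-norm contributions over the $\ell$ batches, and taking a union bound, yields $\|\PTc\mtx{Y}_\ell\| < 1$ with the stated probability. The two terms in the failure probability correspond to the two concentration steps: the $6\log(n_2)(n_1+n_2)^{2-2\beta}$ term collects the operator-norm Chernoff bounds (one per batch times a union bound), while the $n_2^{2-2\sqrt\beta}$ term comes from the scalar Bernstein bounds that propagate the $\ell_\infty$ control, with $\sqrt\beta$ rather than $\beta$ arising because that bound must be iterated $\ell \approx \log n_2$ times.
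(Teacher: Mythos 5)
Your overall strategy---dual certificate plus golfing scheme plus operator concentration---is the same as the paper's, and most of the individual estimates you invoke (the bound $\|\PT(\eab)\|_F^2\le 2\mu_0 r/n_1$ from {\bf A0}, the operator Bernstein bound for $\|\PT\PO\PT - p\,\PT\|$, the $\ell_\infty$ propagation from {\bf A1}) are exactly the lemmas the paper proves. But you depart from the paper on the two points that are actually its main contribution, and one of those departures contains a genuine error. First, the paper does \emph{not} reduce to Bernoulli sampling; it reduces to sampling \emph{with replacement} (Proposition~\ref{prop:swr}), precisely so that each batch $\ROs{k}$ is a sum of i.i.d.\ rank-one operators to which the noncommutative Bernstein inequality applies directly, at the cost of having to control the maximum multiplicity of a repeated index (Proposition~\ref{prop:duplicate-count}). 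That multiplicity bound is where the $n_2^{2-2\beta^{1/2}}$ term in the failure probability comes from. Your explanation of that term---that $\sqrt{\beta}$ arises from iterating the scalar Bernstein bound $\log n_2$ times---does not correspond to anything in your own Bernoulli framework (where the sampling operator is a projection and no multiplicity issue exists), so your probability bookkeeping does not actually produce the stated bound.

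Second, and more substantively: you aim for an \emph{exact} certificate with $\PT\mtx{Y}=\mtx{U}\mtx{V}^*$ and propose to achieve it by adding ``a small correction inside $T$'' to $\mtx{Y}_\ell$. As stated this step fails: the certificate must lie in the range of the sampling operator (i.e., be supported on $\Omega$), and a correction lying in $T$ destroys that support constraint. The repair is either to take the correction of the form $\PO\PT(\PT\PO\PT)^{-1}(\mtx{W}_\ell)$, which stays supported on $\Omega$ but then requires a further argument that its $T^\perp$-component has small spectral norm, or---as the paper does---to abandon exactness entirely and show that an \emph{approximate} certificate with $\|\PT(\mtx{Y})-\mtx{U}\mtx{V}^*\|_F\le\sqrt{r/(2n_2)}$ and $\|\PTc(\mtx{Y})\|<1/2$ suffices. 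The latter requires the extra observation that for $\mtx{Z}\in\ker\RO$ one has $\|\PTc(\mtx{Z})\|_F\ge\sqrt{2r/n_2}\,\|\PT(\mtx{Z})\|_F$, which absorbs the residual term $\langle\mtx{U}\mtx{V}^*-\PT(\mtx{Y}),\PT(\mtx{Z})\rangle$ in the subgradient inequality. Your proposal is missing one of these two repairs, and without it the uniqueness argument does not close.
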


The assumptions $\bf{A0}$ and $\bf{A1}$ were introduced in~\cite{CandesRecht08}.  Both  $\mu_0$ and $\mu_1$ may depend on $r$, $n_1$, or $n_2$. Moreover, note that $\mu_1 \leq \mu_0 \, \sqrt{r}$ by the Cauchy-Schwarz inequality.  As  shown in~\cite{CandesRecht08}, both subspaces selected from the uniform distribution and spaces constructed as the span of singular vectors with bounded entries are not only incoherent with the standard basis, but also obey {\bf A1} with high probability for values of $\mu_1$ at most logarithmic in $n_1$ and/or $n_2$.  Applying this theorem to the models studied in Section 2 of~\cite{CandesRecht08}, we find that there is a numerical constant $c_u$ such that $c_u r (n_1+n_2)\log^5(n_2)$ entries are sufficient to reconstruct a rank $r$ matrix whose row and column spaces are sampled from the Haar measure on the Grassmann manifold.  If $r>\log(n_2)$, the number of entries can be reduced to $c_u r (n_1+n_2)\log^4(n_2)$.  Similarly, there is a numerical constant $c_i$ such that $c_i \mu_0^2 r(n_1+n_2) \log^3(n_2)$ entries are sufficient to recover a matrix of arbitrary rank $r$ whose singular vectors have entries with magnitudes bounded by $\sqrt{\mu_0/n_1}$.  

Theorem~\ref{thm:main} greatly improves upon prior results.  First of all, it has the weakest  assumptions on the matrix to be recovered. In addition to assumption {\bf A1}, Cand\`es and Tao require a ``strong incoherence condition'' (see~\cite{CandesTao09}) which is considerably more restrictive than the assumption {\bf A0} in Theorem~\ref{thm:main}.  Many of their results also require restrictions on the rank of $\mtx{M}$, and their bounds depend superlinearly on $\mu_0$. Keshavan \emph{et al} require the matrix rank to be no more than $\log(n_2)$, and require bounds on the maximum magnitude of the  entries in $\mtx{M}$ and the ratios $\sigma_1(\mtx{M})/\sigma_r(\mtx{M})$ and $n_2/n_1$.   Theorem~\ref{thm:main} makes no such assumptions about the rank, aspect ratio, nor condition number of $\mtx{M}$.  Moreover, \eq{main3} has a smaller log factor than~\cite{CandesTao09}, and features numerical constants that are both explicit and small.  

Also note that there is not much room for improvement in the bound for $m$. It is a consequence of the coupon collector's problem that at least  $n_2 \log n_2$ uniformly sampled entries are necessary just to guarantee that at least one entry in every row and column is observed with high probability. In addition, rank $r$ matrices have $r(n_1+n_2-r)$ parameters, a fact that can be verified by counting the number of degrees of freedom in the singular value decomposition.  Interestingly, Cand\`es and Tao showed that $C\mu_0 n_2 r \log(n_2)$ entries were \emph{necessary} for completion when the entries are sampled uniformly at random~\cite{CandesTao09}.   Hence,~\eq{main3} is optimal up to a small numerical constant times $\log(n_2)$.

Most importantly, the proof of Theorem~\ref{thm:main} is short and straightforward. Cand\`es and Recht employed sophisticated tools from the study of random variables on Banach spaces including decoupling tools and powerful moment inequalities for the norms of random matrices. Cand\`es and Tao rely on intricate moment calculations spanning over $30$ pages.   The present work only uses basic matrix analysis, elementary large deviation bounds, and a noncommutative version of Bernstein's Inequality proven here in the Appendix.

The proof of Theorem~\ref{thm:main} is inspired by a recent paper in quanutm information which considered the problem of reconstructing the density matrix of a quantum ensemble using as few measurements as possible~\cite{Gross09}.  Their work adapted results from~\cite{CandesRecht08} and~\cite{CandesPlan09} to the quantum regime by using special algebraic properties of quantum measurements.  Their proof followed a methodology analogous to the approach of Cand\`es and Recht but had two main differences: they used a sampling with replacement model as a proxy for uniform sampling, and they deployed a powerful noncommutative Chernoff bound developed by Ahlswede and Winter for use in quantum information theory~\cite{AhlswedeWinter02}.  In this paper, I adapt these two strategies from~\cite{Gross09} to the matrix completion problem. In section~\ref{sec:swr} I show how the sampling with replacement model bounds probabilities in the uniform sampling model, and present very short proofs of some of the main results in~\cite{CandesRecht08}.   Surprisingly, this yields a simple proof of Theorem~\ref{thm:main}, provided in Section~\ref{sec:main-result}, which has the least restrictive assumptions of any assertion proven thus far.

\section{Preliminaries and notation}\label{sec:notation}

Before continuing, let us survey the notations used throughout the paper. I closely follow the conventions established in~\cite{CandesRecht08}, and invite the reader to consult this reference for a more thorough discussion of the matrix completion problem and the associated convex geometry.  A thorough introduction to the necessary matrix analysis used in this paper can be found in~\cite{Recht07}.

Matrices are bold capital, vectors are bold lowercase and scalars or entries are not bold. For example, $\mtx{X}$ is a matrix, and $X_{ij}$ its $(i,j)$th entry. Likewise $\vct{x}$ is a vector, and $x_i$ its $i$th component. If $\vct{u}_k\in \R^n$ for $1\leq k \leq d$ is a collection of vectors,  $[\vct{u}_1,\ldots,\vct{u}_d]$ will denote the $n\times d$ matrix whose $k$th column is $\vct{u}_k$.  $\vct{e}_k$ will denote the $k$th standard basis vector in $\R^d$, equal to $1$ in component $k$ and $0$ everywhere else.  The dimension of $\vct{e}_k$ will always be clear from context.  $\mtx{X}^*$ and $\vct{x}^*$ denote the transpose of matrices $\mtx{X}$ and vectors $\vct{x}$ respectively.

A variety of norms on matrices will be discussed. The spectral norm of a matrix is denoted by $\|\mtx{X}\|$.  The Euclidean inner product between two matrices is $\<\mtx{X}, \mtx{Y}\> = \trace(\mtx{X}^* \mtx{Y})$, and the corresponding Euclidean norm, called the Frobenius or Hilbert-Schmidt norm, is denoted $\|\mtx{X}\|_F$.  That is, $\|\mtx{X}\|_F=\<\mtx{X},\mtx{X}\>^{1/2}$.  The nuclear norm of a
matrix $\mtx{X}$ is $\|\mtx{X}\|_*$.  The maximum entry of $\mtx{X}$ (in absolute value) is denoted by $\|\mtx{X}\|_\infty \equiv \max_{ij} |X_{ij}|$. For vectors, the only norm applied is the usual Euclidean $\ell_2$ norm, simply denoted as $\|\vct{x}\|$.

Linear transformations that act on matrices will be denoted by calligraphic letters.  In particular, the identity operator will be denoted by $\OpId$. The spectral norm (the top singular value) of such an operator will be denoted by $\|{\cal A}\| = \sup_{\mtx{X} :\|\mtx{X}\|_F \le 1} \, \|{\cal A}(\mtx{X})\|_F$.

Fix once and for all a matrix $\mtx{M}$ obeying the assumptions of Theorem~\ref{thm:main}. Let $\mtx{u}_k$ (respectively $\mtx{v}_k$) denote the $k$th column of $\mtx{U}$ (respectively $\mtx{V}$). Set $U \equiv \lspan{(\vct{u}_1, \ldots, \vct{u}_r)}$, and $V \equiv \lspan{(\vct{v}_1, \ldots, \vct{v}_r)}$.  Also assume, without loss of generality, that $n_1 \leq n_2$. It is convenient to introduce the orthogonal decomposition $\R^{n_1 \times n_2} = T \oplus T^\perp$ where $T$ is the linear space spanned by elements of the form $\vct{u}_k \vct{y}^*$ and $\vct{x} \vct{v}_k^*$, $1 \le k \le r$, where $\vct{x}$ and $\vct{y}$ are arbitrary, and $T^\perp$ is its orthogonal complement.  $T^\perp$ is the subspace of matrices spanned by the family $(\vct{x}\vct{y}^*)$, where $\vct{x}$ (respectively $\vct{y}$) is any vector orthogonal to $U$ (respectively $V$).

The orthogonal projection $\PT$ onto $T$ is given by
\begin{equation}\label{eq:PT}
  \PT(\mtx{Z}) = \mtx{P}_U \mtx{Z} + \mtx{Z} \mtx{P}_V  - \mtx{P}_U \mtx{Z} \mtx{P}_V,
\end{equation}
where $\mtx{P}_{U}$ and $\mtx{P}_V$ are the orthogonal projections onto $U$ and $V$ respectively.  Note here that while $\mtx{P}_{U}$ and $\mtx{P}_V$ are matrices, $\PT$ is a linear operator mapping
matrices to matrices.  The orthogonal projection onto $T^\perp$ is given by
\[
\PTc(\mtx{Z}) = (\OpId - \PT)(\mtx{Z}) = (\Id_{n_1} -
\mtx{P}_{U}) \mtx{Z} (\Id_{n_2} - \mtx{P}_{V})
\]
where $\Id_d$ denotes the $d\times d$ identity matrix.  It follows from the definition \eq{PT} of $\PT$ that
\[
  \PT(\eab) = (\mtx{P}_U \vct{e}_a) \vct{e}_b^* + \vct{e}_a (\mtx{P}_V \vct{e}_b)^* - (\mtx{P}_U \vct{e}_a)(\mtx{P}_V \vct{e}_b)^*.
\]
This gives
\[
\|\PT(\eab)\|_F^2 = \<\PT(\eab), \eab\>   = \|\mtx{P}_U
  \vct{e}_a\|^2 + \|\mtx{P}_V
  \vct{e}_b\|^2 - \|\mtx{P}_U \vct{e}_a\|^2 \, \|\mtx{P}_V \vct{e}_b\|^2\,.
\]
Since $\|\mtx{P}_U \vct{e}_a\|^2 \le \mu(U)r/n_1$ and
$\|\mtx{P}_V \vct{e}_b\|^2 \le \mu(V)r/n_2$,
\begin{equation}
  \|\PT(\eab)\|_F^2 \le  \max\{\mu(U),\mu(V)\} r\frac{n_1+n_2}{n_1n_2} \leq  \mu_0 r\frac{n_1+n_2}{n_1n_2}
\label{eq:PTea-bound}
\end{equation}
I will make frequent use of this calculation throughout the sequel.

\section{Sampling with Replacement}\label{sec:swr}

As discussed above, the main contribution of this work is an analysis of uniformly sampled sets of entries via the study of a sampling with replacement model.  All of the previous work~\cite{CandesRecht08,CandesTao09,Keshavan09} studied a Bernoulli sampling model as a proxy for uniform sampling.  There, each entry was revealed independently with probability equal to $p$.  In all of these results, the theorem statements concerned sampling sets of $m$ entries uniformly, but it was shown that probability of failure under Bernoulli sampling with $p=\tfrac{m}{n_1n_2}$ closely approximated the probability of failure under uniform sampling.  The present work will analyze the situation where each entry index is sampled independently from the uniform distribution on $\{1,\ldots, n_1\} \times\{1,\ldots,n_2\}$.  This modification of the sampling model gives rise to all of the simplifications below.  

It would appear that sampling with replacement is not suitable for analyzing matrix completion as one might encounter duplicate entries.  However, just as is the case with Bernoulli sampling, bounding the likelihood of error when sampling with replacement allows us to bound the probability of the nuclear norm heuristic failing under uniform sampling.
\begin{proposition}\label{prop:swr}
	The probability that the nuclear norm heuristic fails when the set of observed entries is sampled uniformly from the collection of sets of size $m$ is less than or equal to the probability that the heuristic fails when $m$ entries are sampled independently  with replacement.
\end{proposition}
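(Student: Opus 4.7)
The plan rests on two ingredients: a monotonicity property of the recovery program \eq{sdp}, and an explicit coupling between the two sampling models.

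First, I would observe that the failure event is antitone in the observation set. Concretely, if $\Omega'\subseteq\Omega$, the $\Omega$-feasible set in \eq{sdp} is contained in the $\Omega'$-feasible set, and $\mtx{M}$ lies in both. Hence whenever $\mtx{M}$ is the \emph{unique} minimizer on $\Omega'$, every $\Omega$-feasible $\mtx{X}\neq\mtx{M}$ is also $\Omega'$-feasible and must carry strictly larger nuclear norm, so $\mtx{M}$ is the unique minimizer on $\Omega$ as well. Equivalently, success on a smaller observation set forces success on every superset, and failure on a larger set implies failure on every subset.

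Next, I would build a coupling that realizes both sampling models on one probability space. Because duplicated indices in a with-replacement sample contribute redundant equality constraints in \eq{sdp}, the heuristic's output depends only on the random set $\Omega_{\mathrm{wr}}$ of distinct indices obtained, whose size $M'=|\Omega_{\mathrm{wr}}|$ satisfies $M'\le m$. By the symmetry of i.i.d.\ uniform draws, conditional on $M'=m'$ the set $\Omega_{\mathrm{wr}}$ is uniformly distributed over all $m'$-subsets of $\{1,\dots,n_1\}\times\{1,\dots,n_2\}$. I then enlarge $\Omega_{\mathrm{wr}}$ to a set $\Omega_{\mathrm{unif}}$ of size $m$ by appending $m-M'$ additional indices drawn uniformly and without replacement from the complement of $\Omega_{\mathrm{wr}}$. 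The classical identity
\[
\binom{n_1 n_2}{m}\binom{m}{m'} \;=\; \binom{n_1 n_2}{m'}\binom{n_1 n_2-m'}{m-m'}
\]
then confirms that the conditional, and hence marginal, law of $\Omega_{\mathrm{unif}}$ is uniform over all $m$-subsets, exactly matching the uniform sampling model.

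In this coupling, $\Omega_{\mathrm{wr}}\subseteq\Omega_{\mathrm{unif}}$ by construction, so the monotonicity step gives $\{\textrm{failure on }\Omega_{\mathrm{unif}}\}\subseteq\{\textrm{failure on }\Omega_{\mathrm{wr}}\}$, and taking probabilities yields the claimed inequality. The only place I would take care is the coupling itself—checking that the two-stage "sample, then top up from the complement" construction genuinely produces a uniform $m$-subset; every other step is immediate.
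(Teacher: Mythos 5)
Your proposal is correct and follows essentially the same route as the paper: both arguments rest on the observation that, conditioned on the number of distinct indices drawn, sampling with replacement yields a uniformly distributed subset of that size, combined with the fact that failure of the heuristic is antitone in the observation set. Your explicit ``top-up'' coupling is simply a constructive way of justifying the monotonicity of the failure probability in the set size, a step the paper asserts without proof (``the probability decreases as the number of entries revealed is increased'').
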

\begin{proof} The proof follows the argument in Section II.C of~\cite{crt06}. Let $\Omega'$ be a collection of $m$ entries, each sampled independently from the uniform distribution on $\{1,\ldots, n_1\} \times\{1,\ldots,n_2\}$.  Let $\Omega_k$ denote a set of entries of size $k$ sampled uniformly from all collections of entries of size $k$.  It follows that
\begin{align*}
	\P(\mbox{Failure}(\Omega')) &= \sum_{k=0}^m P(\mbox{Failure}(\Omega') ~|~|\Omega'|=k)
		P(|\Omega'|=k)\\
	&= \sum_{k=0}^m P(\mbox{Failure}(\Omega_k))P(|\Omega'|=k)\\
	&\geq P(\mbox{Failure}(\Omega_m))\sum_{k=0}^m P(|\Omega'|=k)= P(\mbox{Failure}(\Omega_m))\,.
\end{align*}
Where the inequality follows because $P(\mbox{Failure}(\Omega_m)) \geq P(\mbox{Failure}(\Omega_{m'}))$ if $m\leq m'$.  That is, the probability decreases as the number of entries revealed is increased.
\end{proof}
 
Surprisingly, changing the sampling model makes most of the theorems from~\cite{CandesRecht08} simple consequences of a noncommutative variant of Bernstein's Inequality.  

\begin{theorem}[Noncommutative Bernstein Inequality]\label{thm:bernstein}
Let $\mtx{X}_1,\ldots,\mtx{X}_L$ be independent zero-mean random matrices of dimension $d_1\times d_2$.  Suppose $\rho_k^2=\max\{\|\E[\mtx{X}_k\mtx{X}_k^*]\|,\|\E[\mtx{X}_k^*\mtx{X}_k]\| \}$ and $\|\mtx{X}_k\|\leq M$ almost surely for all $k$.  Then for any $\tau>0$,
\[
	\P\left[ \left\| \sum_{k=1}^L \mtx{X}_k\right\| > \tau \right]  \leq (d_1+d_2)\exp\left( \frac{-\tau^2/2}
	{\sum_{k=1}^L \rho_k^2 + M\tau/3} \right)\,.
\]
\end{theorem}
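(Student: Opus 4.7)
The plan is to follow the Ahlswede--Winter matrix Chernoff recipe, which replaces each scalar step of the usual Bernstein argument with its matrix analogue. Because the $\mtx{X}_k$ are rectangular, first I would \emph{Hermitize} by passing to the $(d_1+d_2)\times(d_1+d_2)$ self-adjoint dilations
$$\tilde{\mtx{X}}_k \;=\; \begin{pmatrix} \mtx{0} & \mtx{X}_k \\ \mtx{X}_k^* & \mtx{0} \end{pmatrix},$$
which satisfy $\|\tilde{\mtx{X}}_k\|=\|\mtx{X}_k\|\le M$, $\E\tilde{\mtx{X}}_k=\mtx{0}$, and $\|\E\tilde{\mtx{X}}_k^2\|=\max\{\|\E\mtx{X}_k\mtx{X}_k^*\|,\|\E\mtx{X}_k^*\mtx{X}_k\|\}=\rho_k^2$. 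Setting $\tilde{\mtx{S}}=\sum_k\tilde{\mtx{X}}_k$, the block form gives $\|\sum_k\mtx{X}_k\|=\|\tilde{\mtx{S}}\|$, and since $(u,v)^*\mapsto(u,-v)^*$ flips the sign of every eigenvalue of $\tilde{\mtx{S}}$, its spectrum is symmetric about zero, so $\|\tilde{\mtx{S}}\|=\lambda_{\max}(\tilde{\mtx{S}})$. It therefore suffices to prove a one-sided tail bound on $\lambda_{\max}(\tilde{\mtx{S}})$.

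Now I would apply the matrix Laplace transform: for any $\lambda>0$,
$$\P\bigl[\lambda_{\max}(\tilde{\mtx{S}})>\tau\bigr]\;\le\;e^{-\lambda\tau}\,\E\,\trace\exp(\lambda\tilde{\mtx{S}}).$$
The noncommutative heart of the argument is to peel off the sum in the exponent. Iterating the Golden--Thompson inequality $\trace e^{A+B}\le\trace(e^A e^B)$ together with the inequality $\trace(PQ)\le\|P\|\,\trace(Q)$ for positive semidefinite $P,Q$, and taking expectations one factor at a time (using independence), yields the Ahlswede--Winter bound
$$\E\,\trace\exp(\lambda\tilde{\mtx{S}})\;\le\;(d_1+d_2)\prod_{k=1}^{L}\bigl\|\E\exp(\lambda\tilde{\mtx{X}}_k)\bigr\|.$$

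Next I would control each factor. Because $\E\tilde{\mtx{X}}_k=\mtx{0}$ and $\|\tilde{\mtx{X}}_k\|\le M$, the semidefinite moment comparison $\tilde{\mtx{X}}_k^p\preceq M^{p-2}\tilde{\mtx{X}}_k^2$ holds for every $p\ge 2$. Summing the exponential series inside the semidefinite cone and taking expectations gives, for $0<\lambda<3/M$,
$$\E\exp(\lambda\tilde{\mtx{X}}_k)\;\preceq\;\exp\!\left(\frac{\lambda^2/2}{1-\lambda M/3}\,\E\tilde{\mtx{X}}_k^2\right),$$
so each factor is bounded by $\exp\!\bigl(\tfrac{\lambda^2\rho_k^2/2}{1-\lambda M/3}\bigr)$ in operator norm. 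Collecting everything,
$$\P\!\left[\Bigl\|\textstyle\sum_k\mtx{X}_k\Bigr\|>\tau\right]\;\le\;(d_1+d_2)\exp\!\left(-\lambda\tau+\frac{\lambda^2\sum_k\rho_k^2/2}{1-\lambda M/3}\right),$$
and the Bernstein-style choice $\lambda=\tau/\bigl(\sum_k\rho_k^2+M\tau/3\bigr)$ (which is automatically less than $3/M$) collapses the exponent to $-\tau^2/\bigl(2\sum_k\rho_k^2+2M\tau/3\bigr)$, exactly the claimed bound.

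The main obstacle is the Golden--Thompson step: the scalar identity $\E e^{A+B}=(\E e^A)(\E e^B)$ for independent $A,B$ simply fails in the noncommutative setting, and it is Golden--Thompson combined with the trace/operator-norm inequality that rescues a product-form bound. A secondary subtlety is that the moment comparison $\tilde{\mtx{X}}^p\preceq M^{p-2}\tilde{\mtx{X}}^2$ must be read in semidefinite order (not merely in norm) so that termwise summation of the exponential series stays inside the positive-semidefinite cone; everything else is a direct matrix transcription of the familiar scalar calculation.
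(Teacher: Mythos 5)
Your proposal is correct and follows essentially the same route as the paper: Hermitize via the self-adjoint dilation, apply the Ahlswede--Winter tensorization (operator Markov plus iterated Golden--Thompson with $\trace(\mtx{A}\mtx{B})\leq \|\mtx{B}\|\trace(\mtx{A})$), and control each factor with the semidefinite moment comparison $\E[\mtx{Y}_k^{j}]\preceq M^{j-2}\E[\mtx{Y}_k^2]$. The only divergence is cosmetic: you bound the matrix moment generating function by the rational Bernstein form $\exp\bigl(\tfrac{\lambda^2\rho_k^2/2}{1-\lambda M/3}\bigr)$ and pick $\lambda$ in closed form, whereas the paper keeps the Bennett-style exponent $\tfrac{\rho_k^2}{M^2}(e^{\lambda M}-1-\lambda M)$ and defers the final algebra to ``basic approximations''; your version is if anything more explicit at that last step.
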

Note that in the case that $d_1=d_2=1$, this is precisely the two sided version of the standard Bernstein Inequality.  When the $\mtx{X}_k$ are diagonal, this bound is the same as applying the standard Bernstein Inequality and a union bound to the diagonal of the matrix summation.  Furthermore, observe that the right hand side is less than $(d_1+d_2) \exp(-\tfrac{3}{8}\tau^2/( \sum_{k=1}^L\rho_k^2))$ as long as $\tau\leq \tfrac{1}{M} \sum_{k=1}^L\rho_k^2$.   This condensed form of the inequality will be used exclusively throughout.  Theorem~\ref{thm:bernstein} is a corollary of an Chernoff bound for finite dimensional operators developed by Ahlswede and Winter~\cite{AhlswedeWinter02}. A similar inequality for symmetric i.i.d. matrices is proposed in~\cite{Gross09}.  The proof is provided in the Appendix.

Let us now record two theorems, proven for the Bernoulli model in~\cite{CandesRecht08}, that admit very simple proofs in the sampling with replacement model.   The theorem statements requires some additional notation.  Let $\Omega = \{(a_k,b_k)\}_{k=1}^l$ be a collection of indices sampled uniformly with replacement.  Set $\RO$ to be the operator
\[
	\RO(\mtx{Z}) = \sum_{k=1}^{|\Omega|} \langle \eabk, \mtx{Z} \rangle \eabk\,.
\]
Note that the $(i,j)$th component of $\RO(\mtx{X})$ is zero unless $(i,j)\in\Omega$.  For $(i,j)\in \Omega$, $\RO(\mtx{X})$ is equal to $X_{ij}$ times the multiplicity of $(i,j) \in \Omega$. Unlike in previous work on matrix completion, $\RO$ is not a projection operator if there are duplicates in $\Omega$.  Nonetheless, this does not adversely affect the argument, and $\RO(\mtx{X})=0$ if and only if $X_{ab}=0$ for all $(a,b)\in\Omega$.  Moreover, we can show that the maximum duplication of any entry is always less than $\tfrac{8}{3}\log(n_2)$ with very high probability.

\begin{proposition}\label{prop:duplicate-count}
With probability at least $1-n_2^{2-2\beta}$, the maximum number of repetitions of any entry in $\Omega$ is less than $\tfrac{8}{3} \beta\log(n_2)$ for $n_2\geq 9$ and $\beta>1$.
\end{proposition}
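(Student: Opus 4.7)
The plan is to control the multiplicity of each fixed index via a direct tail bound on a single binomial and then combine the estimates by a union bound. Fix an entry $(i,j)$ and let $N_{ij}$ count the number of $k\in\{1,\ldots,m\}$ with $(a_k,b_k)=(i,j)$. Since the samples are i.i.d.\ uniform on the $n_1 n_2$ possible indices, $N_{ij}$ is $\mathrm{Binomial}(m,1/(n_1 n_2))$, and the elementary estimate
\[
\P[N_{ij}\ge c] \;\le\; \binom{m}{c}(n_1 n_2)^{-c} \;\le\; \left(\frac{em}{c\,n_1 n_2}\right)^{c}
\]
holds for every integer $c\ge1$, where I use $\binom{m}{c}\le m^c/c!$ together with $c!\ge(c/e)^c$. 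In the regime of interest, $m\le n_1 n_2$ (otherwise one is oversampling a finite product set), so the right-hand side reduces to $(e/c)^c$.

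Next I would set $c = \tfrac{8}{3}\beta\log(n_2)$ and verify that $(e/c)^c\le n_2^{-2\beta}$. Taking logarithms turns this into $c(\log c - 1)\ge 2\beta\log(n_2)$, and substituting the chosen value of $c$ collapses it to the purely numerical condition $\log c\ge 7/4$, i.e.\ $c\ge e^{7/4}\approx 5.755$. Under the stated hypotheses $n_2\ge 9$ and $\beta>1$ one has $c>\tfrac{8}{3}\log 9\approx 5.86$, so the inequality is satisfied with a slim but nonzero margin.

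The proposition then follows by a union bound over the at most $n_1 n_2 \le n_2^2$ entries:
\[
\P\!\left[\max_{i,j} N_{ij}\ge \tfrac{8}{3}\beta\log(n_2)\right] \;\le\; n_1 n_2 \cdot n_2^{-2\beta} \;\le\; n_2^{2-2\beta}.
\]
The step requiring the most care is the numerical pinch: the constant $8/3$ is calibrated precisely so that $\log c\ge 7/4$ survives in the boundary regime $\beta\to 1^{+},\ n_2=9$. Any smaller constant in front of $\beta\log(n_2)$ would fail this verification, and any looser version of the Stirling estimate $\binom{m}{c}\le(em/c)^{c}$ would force the constant up. Once this verification is in hand, the rest of the argument is essentially a one-line union bound.
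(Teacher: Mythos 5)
Your proof is correct and follows essentially the same route as the paper: a per-entry tail bound that reduces to $(e/c)^c$ under $m \le n_1 n_2$, a union bound over the $n_1 n_2 \le n_2^2$ entries, and the numerical check that $c = \tfrac{8}{3}\beta\log(n_2) \ge e^{7/4}$ when $n_2 \ge 9$ and $\beta > 1$. The only cosmetic difference is that you derive the binomial tail from $\binom{m}{c}p^c \le (em p/c)^c$ directly, whereas the paper cites the Chernoff--Hagerup form $(mp/t)^t e^{t-mp}$; both collapse to the identical estimate.
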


\begin{proof}
This assertion can be proven by applying a standard Chernoff bound for the Bernoulli distribution.  Note that for a fixed entry, the probability it is sampled more than $t$ times is equal to the probability of more than $t$ heads occurring in a sequence of $m$ tosses where the probability of a head is $\tfrac{1}{n_1n_2}$.  This probability can be upper bounded by
\[
	\P[\mbox{more than}~t~\mbox{heads in}~m~\mbox{trials}] \leq \left(\frac{m}{n_1n_2 t}\right)^t \exp\left(t-\frac{m}{n_1n_2}\right)
\]
(see~\cite{Hagerup90}, for example).  Applying the union bound over all of the $n_1n_2$ entries and the fact that $\tfrac{m}{n_1n_2}<1$, we have
\begin{align*}
\P[\mbox{any entry is selected more than}~\tfrac{8}{3} \beta\log(n_2)~\mbox{times}]
&\leq n_1n_2 \left(\tfrac{8}{3}\beta \log(n_2) \right)^{-\tfrac{8}{3} \beta\log(n_2)} \exp\left(\tfrac{8}{3} \beta\log(n_2)\right) \leq
n_2^{2-2\beta}
\end{align*}
when $n_2\geq 9$.
\end{proof}

This application of the Chernoff bound is very crude, and much tighter bounds can be derived using more careful analysis.  For example in~\cite{Gonnet81}, the maximum oversampling is shown to be bounded by $O(\tfrac{\log(n_2)}{\log\log(n_2)})$.  For our purposes here, the loose upper bound provided by Proposition~\ref{prop:duplicate-count} will be more than sufficient.

In addition to this bound on the norm of $\RO$, the following theorem asserts that the operator $\PT\RO\PT$ is also very close to an isometry on $T$ if the number of sampled entries is sufficiently large.  This result is analgous to the Theorem 4.1 in~\cite{CandesRecht08} for the Bernoulli model, whose proof uses several powerful theorems from the study of probability in Banach spaces.  Here, one only needs to compute a few low order moments and then apply Theorem~\ref{thm:bernstein}.

\begin{theorem}
 \label{thm:near-isometry}
Suppose $\Omega$ is a set of entries of size $m$ sampled independently and uniformly with replacement.  Then for all $\beta>1$,
\[
	\frac{n_1n_2}{m} \left\|\PT \RO \PT - \frac{m}{n_1n_2}\PT\right\| \leq \sqrt{ \frac{16 \mu_0 r(n_1+n_2)\,\beta \log(n_2)}{3m}}
\]
with probability at least $1- 2n_2^{2-2\beta}$ provided that $m>\tfrac{16}{3} \mu_0 r(n_1+n_2)\,\beta \log(n_2)$.
\end{theorem}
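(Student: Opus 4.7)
The plan is to realize the operator $\PT\RO\PT - \tfrac{m}{n_1n_2}\PT$ as a sum of $m$ independent, zero-mean, self-adjoint rank-one operators and then invoke the noncommutative Bernstein inequality (Theorem~\ref{thm:bernstein}) applied to these operators regarded as $n_1 n_2 \times n_1 n_2$ matrices acting on the space of $n_1\times n_2$ matrices with the Frobenius inner product.

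Concretely, for each sampled index $(a_k,b_k)$ define the rank-one operator $\mathcal{T}_k(\mtx{Z}) = \langle \PT(\eabk),\mtx{Z}\rangle\,\PT(\eabk)$, so that $\PT\RO\PT = \sum_k \mathcal{T}_k$. Since $(a_k,b_k)$ is uniform on $[n_1]\times[n_2]$ and $\sum_{a,b}\langle\eab,\mtx{Z}\rangle\eab = \mtx{Z}$, one gets $\E[\mathcal{T}_k] = \tfrac{1}{n_1n_2}\PT$. Hence the operator of interest is exactly $\sum_k (\mathcal{T}_k - \E\mathcal{T}_k)$, a sum of i.i.d.\ zero-mean self-adjoint operators.

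The two quantities needed by Bernstein are the almost-sure bound $M$ and the variance proxy $\sum_k \rho_k^2$. For the former, the key input is \eq{PTea-bound}, which gives $\|\mathcal{T}_k\| = \|\PT(\eabk)\|_F^2 \leq \mu_0 r(n_1+n_2)/(n_1n_2)$; subtracting the mean loses only a $1/(n_1n_2)$ term, so one can take $M = \mu_0 r(n_1+n_2)/(n_1n_2)$ up to an absorbed constant. For the variance, observe the identity $\mathcal{T}_k^2 = \|\PT(\eabk)\|_F^2\,\mathcal{T}_k$, which together with \eq{PTea-bound} yields $\E[\mathcal{T}_k^2] \preceq M\,\E[\mathcal{T}_k]$ as positive operators; combined with $\|\E\mathcal{T}_k\| = 1/(n_1n_2)$ this gives $\rho_k^2 \leq M/(n_1n_2)$, so $\sum_k \rho_k^2 \leq mM/(n_1n_2)$.

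Plugging $\tau = \tfrac{m}{n_1n_2}\sqrt{16\mu_0 r(n_1+n_2)\beta\log(n_2)/(3m)}$ into the condensed form of Theorem~\ref{thm:bernstein} (valid because the hypothesis $m > \tfrac{16}{3}\mu_0 r(n_1+n_2)\beta\log(n_2)$ is exactly what is required to ensure $\tau \leq M^{-1}\sum_k\rho_k^2 = m/(n_1n_2)$) gives an exponent of $-\tfrac{3}{8}\tau^2/\sum_k\rho_k^2 = -2\beta\log(n_2)$, with dimension prefactor $2 n_1 n_2 \leq 2 n_2^2$. This produces a failure probability of at most $2n_2^{2-2\beta}$, matching the stated bound. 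The main conceptual point to get right is the second-moment computation via the identity $\mathcal{T}_k^2 = \|\PT(\eabk)\|_F^2 \mathcal{T}_k$, which is what lets the single-entry coherence bound \eq{PTea-bound} do double duty as both the uniform bound and the variance bound; everything else is bookkeeping in the Bernstein inequality.
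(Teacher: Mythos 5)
Your proposal is correct and follows essentially the same route as the paper: the same rank-one operators $\mathcal{T}_k(\mtx{Z}) = \langle \PT(\eabk),\mtx{Z}\rangle \PT(\eabk)$ with $\E[\mathcal{T}_k]=\tfrac{1}{n_1n_2}\PT$, the same uniform and second-moment bounds via $\|\mathcal{T}_k\|=\|\PT(\eabk)\|_F^2$ and \eq{PTea-bound}, and the same application of the condensed noncommutative Bernstein inequality with the identical choice of $\tau$. The only cosmetic difference is that you bound the uncentered $\E[\mathcal{T}_k^2]$ while the paper bounds the centered second moment using $\|\mtx{A}-\mtx{B}\|\leq\max\{\|\mtx{A}\|,\|\mtx{B}\|\}$ for positive semidefinite operators; both yield the same variance proxy.
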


\begin{proof}
Decompose any matrix $\mtx{Z}$ as $\mtx{Z} = \sum_{ab} \<\mtx{Z},
\eab\> \eab$ so that
\begin{equation}\label{eq:pt-def}
\PT(\mtx{Z}) = \sum_{ab} \<\PT(\mtx{Z}), \eab\> \eab = \sum_{ab}
\<\mtx{Z}, \PT(\eab)\> \eab.
\end{equation}
For $k=1,\ldots, m$ sample $(a_k,b_k)$ from $\{1,\ldots,n_1\}\times\{1,\ldots,n_2\}$ uniformly with replacement. Then $\RO \PT(\mtx{Z}) = \sum_{k=1}^m\, \<\mtx{Z},
\PT(\eabk)\> \, \eabk$ which gives
\[
(\PT \RO \PT)(\mtx{Z}) = \sum_{k=1}^m \, \<\mtx{Z},
\PT(\eabk)\> \,\PT(\eabk).
\]
Now the fact that the operator $\PT \RO \PT$ does not deviate from its expected value
\[
\E (\PT \RO \PT) = \PT  (\E \RO) \PT  = \PT (\frac{m}{n_1n_2} \OpId) \PT = \frac{m}{n_1n_2}\PT
\]
in the spectral norm can be proven using the Noncommutative Bernstein Inequality.  

To proceed, define the operator $\mathcal{T}_{ab}$ which maps $\mtx{Z}$ to $\langle \PT(\eab), \mtx{Z}\rangle \PT(\eab)$.  This operator is rank one, has operator norm $\|\mathcal{T}_{ab}\| = \|\PT(\eab)\|_F^2$, and we have $\PT = \sum_{a,b} \mathcal{T}_{ab}$ by~\eq{pt-def}. Hence, for $k=1,\ldots, m$,  $\E[ \mathcal{T}_{a_kb_k}]=\frac{1}{n_1n_2}\PT$.  

Observe that if $\mtx{A}$ and $\mtx{B}$ are positive semidefinite, we have $\|\mtx{A}-\mtx{B}\| \leq \max\{\|\mtx{A}\|,\|\mtx{B}\|\}$.  Using this fact, we can compute the bound
\[
\|\mathcal{T}_{a_kb_k}-\tfrac{1}{n_1n_2}\PT\| \leq \max\{\|\PT(\eabk)\|_F^2,\tfrac{1}{n_1n_2} \}\leq \mu_0 r\frac{n_1+n_2}{n_1n_2}\,,
\] 
where the final inequality follows from~\eq{PTea-bound}.  We also have
\begin{align*}
\|\E[ (\mathcal{T}_{a_kb_k}-\tfrac{1}{n_1n_2}\PT)^2 ]\|
&=\|\E[ \|\PT(\eabk)\|_F^2 \mathcal{T}_{a_kb_k}]-\frac{1}{n_1^2n_2^2}\PT ]\|\\
&\leq \max\{\|\E[ \|\PT(\eabk)\|_F^2 \mathcal{T}_{a_kb_k}]\|,\frac{1}{n_1^2n_2^2}\}\\
&\leq\max\{\|\E[ \mathcal{T}_{a_kb_k}]\| \, \mu_0 r\frac{n_1+n_2}{n_1n_2},\frac{1}{n_1^2n_2^2}\}  \leq \mu_0 r \frac{n_1 + n_2}{n_1^2n_2^2}
\end{align*}
The theorem now follows by applying the Noncommutative Bernstein Inequality.
\end{proof}

The next theorem is an analog of Theorem 6.3 in~\cite{CandesRecht08} or Lemma 3.2 in~\cite{Keshavan09}.  This theorem asserts that for a fixed matrix, if one sets all of the entries not in $\Omega$ to zero it remains close to a multiple of the original matrix in the operator norm.
\begin{theorem}
  \label{thm:inf-norm-upper-bound}
Suppose $\Omega$ is a set of entries of size $m$ sampled independently and uniformly with replacement and let $\mtx{Z}$ be a fixed $n_1 \times n_2$ matrix.  Assume without loss of generality that $n_1\leq n_2$,  Then for all $\beta>1$,
\[
	 \left\|\left(\frac{n_1n_2}{m}\RO-\OpId\right)(\mtx{Z})\right\|\leq \sqrt{\frac{8\beta n_1n_2^2\log(n_1+n_2)}{3m}} \|\mtx{Z}\|_\infty
\]
with probability at least $1- (n_1+n_2)^{1-\beta}$ provided that $m> 6\beta n_1 \log(n_1+n_2)$.
\end{theorem}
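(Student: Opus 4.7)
The plan is to express $\left(\tfrac{n_1n_2}{m}\RO - \OpId\right)(\mtx{Z})$ as a sum of $m$ i.i.d., zero-mean, random matrices and apply the Noncommutative Bernstein Inequality (Theorem~\ref{thm:bernstein}). Concretely, for each independently sampled index $(a_k,b_k)$, define
\[
 \mtx{X}_k \;=\; \frac{n_1 n_2}{m}\,Z_{a_k b_k}\,\vct{e}_{a_k}\vct{e}_{b_k}^* \;-\; \frac{1}{m}\,\mtx{Z}.
\]
Then $\E[\mtx{X}_k]=0$ because $\E\bigl[Z_{a_k b_k}\eabk\bigr] = \tfrac{1}{n_1n_2}\mtx{Z}$, and by construction $\sum_{k=1}^m \mtx{X}_k = \tfrac{n_1n_2}{m}\RO(\mtx{Z}) - \mtx{Z}$. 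So I just need the two ingredients Bernstein requires: a uniform bound $M$ and a variance parameter $\rho^2$.

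For the uniform bound, $\bigl\|\tfrac{n_1n_2}{m}Z_{a_kb_k}\eabk\bigr\|\le \tfrac{n_1n_2}{m}\|\mtx{Z}\|_\infty$ since $\|\eabk\|=1$, and $\bigl\|\tfrac{1}{m}\mtx{Z}\bigr\|\le \tfrac{\sqrt{n_1n_2}}{m}\|\mtx{Z}\|_\infty$, so triangle inequality gives $\|\mtx{X}_k\|\le M:=\tfrac{2n_1n_2}{m}\|\mtx{Z}\|_\infty$. For the variance, the key calculation is
\[
 \E\bigl[\mtx{Y}_k\mtx{Y}_k^*\bigr] \;=\; \frac{n_1n_2}{m^2}\sum_a \Bigl(\sum_b Z_{ab}^2\Bigr)\vct{e}_a\vct{e}_a^*,
\]
where $\mtx{Y}_k = \tfrac{n_1n_2}{m}Z_{a_kb_k}\eabk$, using $\eabk\eabk^* = \vct{e}_{a_k}\vct{e}_{a_k}^*$. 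This is diagonal with top entry at most $\tfrac{n_1n_2^2}{m^2}\|\mtx{Z}\|_\infty^2$. The analogous computation for $\E[\mtx{Y}_k^*\mtx{Y}_k]$ yields the bound $\tfrac{n_1^2 n_2}{m^2}\|\mtx{Z}\|_\infty^2$, which is smaller because $n_1\le n_2$. Since $\E[\mtx{X}_k\mtx{X}_k^*] = \E[\mtx{Y}_k\mtx{Y}_k^*] - \tfrac{1}{m^2}\mtx{Z}\mtx{Z}^*$ is a difference of positive semidefinite matrices, the identity $\|\mtx{A}-\mtx{B}\|\le \max\{\|\mtx{A}\|,\|\mtx{B}\|\}$ (used in the proof of Theorem~\ref{thm:near-isometry}) together with $\|\mtx{Z}\|^2\le n_1n_2\|\mtx{Z}\|_\infty^2$ gives $\rho_k^2 \le \tfrac{n_1n_2^2}{m^2}\|\mtx{Z}\|_\infty^2$, and hence $\sum_{k=1}^m \rho_k^2 \le \tfrac{n_1n_2^2}{m}\|\mtx{Z}\|_\infty^2$.

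Now I apply the condensed form of Theorem~\ref{thm:bernstein}: for any $\tau$ satisfying $\tau\le \tfrac{1}{M}\sum_k \rho_k^2$,
\[
 \P\Bigl[\,\bigl\|{\textstyle\sum}_k \mtx{X}_k\bigr\| > \tau\Bigr] \;\le\; (n_1+n_2)\exp\!\Bigl(-\tfrac{3\tau^2}{8\sum_k\rho_k^2}\Bigr).
\]
Choosing $\tau = \sqrt{\tfrac{8\beta n_1n_2^2\log(n_1+n_2)}{3m}}\,\|\mtx{Z}\|_\infty$ makes the exponent equal $-\beta\log(n_1+n_2)$, producing the stated probability $(n_1+n_2)^{1-\beta}$. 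The admissibility condition $\tau\le \tfrac{1}{M}\sum_k\rho_k^2 = \tfrac{n_2}{2}\|\mtx{Z}\|_\infty$ reduces (after squaring) to $\tfrac{8\beta n_1\log(n_1+n_2)}{3m}\le \tfrac{1}{4}$, which is implied comfortably by the hypothesis $m>6\beta n_1\log(n_1+n_2)$.

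The whole argument is fairly mechanical once the reduction to i.i.d.\ sums is made; the only spot that requires real care is the variance bound, where one must notice (i) that the matrix $\E[\mtx{Y}_k\mtx{Y}_k^*]$ is diagonal so its spectral norm is just the largest row‑sum of squared entries of $\mtx{Z}$, and (ii) that passing to the centered $\mtx{X}_k$ only subtracts a PSD piece, hence does not inflate the spectral bound. The $n_1\le n_2$ assumption is used precisely to pick the larger of the two row/column variance estimates.
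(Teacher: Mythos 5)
Your decomposition, variance computation, and application of the condensed Bernstein bound are exactly the paper's argument (the paper just keeps the $1/m$ outside the sum rather than inside each summand), and everything is correct up to the final admissibility check. There, however, you have an arithmetic gap: with your uniform bound $M = \tfrac{2n_1n_2}{m}\|\mtx{Z}\|_\infty$, the condition $\tau \le \tfrac{1}{M}\sum_k \rho_k^2 = \tfrac{n_2}{2}\|\mtx{Z}\|_\infty$ squares to $\tfrac{8\beta n_1 \log(n_1+n_2)}{3m} \le \tfrac14$, i.e.\ $m \ge \tfrac{32}{3}\beta n_1\log(n_1+n_2) \approx 10.7\,\beta n_1\log(n_1+n_2)$. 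The stated hypothesis $m > 6\beta n_1\log(n_1+n_2)$ only yields $\tfrac{8\beta n_1\log(n_1+n_2)}{3m} < \tfrac49$, so it does \emph{not} ``comfortably imply'' your requirement --- the last step fails as written.

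The fix is to tighten $M$ the way the paper does: since $\|\tfrac{1}{m}\mtx{Z}\| \le \tfrac{\sqrt{n_1n_2}}{m}\|\mtx{Z}\|_\infty \le \tfrac{1}{2}\cdot\tfrac{n_1n_2}{m}\|\mtx{Z}\|_\infty$ once $n_1 \ge 2$ (so $\sqrt{n_1n_2}\ge 2$), you may take $M = \tfrac{3}{2}\cdot\tfrac{n_1n_2}{m}\|\mtx{Z}\|_\infty$. Then $\tfrac{1}{M}\sum_k\rho_k^2 = \tfrac{2n_2}{3}\|\mtx{Z}\|_\infty$ and the admissibility condition becomes $\tfrac{8\beta n_1\log(n_1+n_2)}{3m}\le\tfrac49$, which is precisely $m \ge 6\beta n_1\log(n_1+n_2)$ --- this is where the constant $6$ in the theorem's hypothesis comes from. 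With that one constant repaired, your proof is the paper's proof.
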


\begin{proof}
First observe that the operator norm can be upper bounded by a multiple of the matrix infinity norm 
\begin{align*}
	\|\mtx{Z}\| = \sup_{\stackrel{\|\vct{x}\|=1}{\|\vct{y}\|=1}}  \sum_{a,b} Z_{ab}y_ax_b \leq \left(\sum_{a,b} Z_{ab}^2y_a^2 \right)^{1/2} \left(\sum_{a,b} x_b^2\right)^{1/2}
	\leq \sqrt{n_2} \max_a \left(\sum_b Z_{ab}^2\right)^{1/2}\leq \sqrt{n_1n_2}\|\mtx{Z}\|_\infty
\end{align*}

Note that	 $\tfrac{n_1n_2}{m}\RO(\mtx{Z}) - \mtx{Z} = \frac{1}{m}\sum_{k=1}^m n_1n_2 Z_{a_kb_k}\eabk - \mtx{Z}$.  This is a sum of zero-mean random matrices, and $\|n_1n_2 Z_{a_kb_k}\eabk - \mtx{Z}\|\leq  \|n_1n_2 Z_{a_kb_k}\eabk\| + \|\mtx{Z}\|< \tfrac{3}{2} n_1n_2 \|\mtx{Z}\|_{\infty}$ for $n_1\geq 2$.  We also have
\begin{align*}
\left\| \E\left[(n_1n_2 Z_{a_kb_k}\eabk - \mtx{Z})^* (n_1n_2 Z_{a_kb_k}\eabk - \mtx{Z}) \right]\right\|
&=\left\| n_1n_2 \sum_{c,d}  Z_{cd}^2 \vct{e}_d\vct{e}_d^* - \mtx{Z}^*\mtx{Z}\right\|\\
&\leq \max\left\{ \left\| n_1n_2 \sum_{c,d}  Z_{cd}^2 \vct{e}_d\vct{e}_d^*\right\|, \left\|\mtx{Z}^*\mtx{Z}\right\|\right\}\\
&\leq n_1 n_2^2\|\mtx{Z}\|_\infty^2
\end{align*}
where we again use the fact that $\|\mtx{A}-\mtx{B}\|\leq \max\{\|\mtx{A}\|,\|\mtx{B}\|\}$ for positive semidefinite $\mtx{A}$ and $\mtx{B}$.  A similar calculation holds for $(n_1n_2 Z_{a_kb_k}\eabk - \mtx{Z}) (n_1n_2 Z_{a_kb_k}\eabk - \mtx{Z})^*$. The theorem now follows by the Noncommutative Bernstein Inequality.
\end{proof}

Finally, the following Lemma is required to prove Theorem~\ref{thm:main}.  Succinctly, it says that for a fixed matrix in $T$, the operator $\PT\RO$ does not increase the matrix infinity norm.
\begin{lemma}
  \label{lemma:norm-contraction}
Suppose $\Omega$ is a set of entries of size $m$ sampled independently and uniformly with replacement and let $\mtx{Z}\in T$ be a fixed $n_1 \times n_2$ matrix. Assume without loss of generality that $n_1\leq n_2$.  Then for all $\beta>2$,
    \[
	 \left\|\frac{n_1n_2}{m}\PT\RO(\mtx{Z}) - \mtx{Z}\right\|_\infty \leq \sqrt{\frac{8\beta \mu_0 r(n_1+n_2) \log n_2}{3m}} \|\mtx{Z}\|_\infty
\]
with probability at least $1- 2n_2^{2-\beta}$ provided that $m>\tfrac{8}{3}\beta \mu_0 r (n_1+n_2) \log n_2$.
\end{lemma}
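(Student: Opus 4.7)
The strategy is to control each of the $n_1n_2\leq n_2^2$ entries of $\frac{n_1n_2}{m}\PT\RO(\mtx{Z})-\mtx{Z}$ by the scalar case of Theorem~\ref{thm:bernstein}, then take a union bound. The argument thus parallels those of Theorems~\ref{thm:near-isometry} and~\ref{thm:inf-norm-upper-bound}, except that Bernstein's inequality is applied entrywise rather than to an operator-valued sum.

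Fix an entry $(a,b)$. Using that $\PT$ is self-adjoint and $\mtx{Z}\in T$, a short calculation gives
\[
\left(\tfrac{n_1n_2}{m}\PT\RO(\mtx{Z})-\mtx{Z}\right)_{ab}=\frac{1}{m}\sum_{k=1}^m Y_k,\qquad Y_k:=n_1n_2\,Z_{a_kb_k}\bigl(\PT(\eab)\bigr)_{a_kb_k}-Z_{ab},
\]
and averaging over the uniform law of $(a_k,b_k)$ confirms $\E Y_k=0$. The variance bound is immediate:
\[
\E Y_k^2\leq n_1n_2\|\mtx{Z}\|_\infty^2\sum_{a',b'}\bigl(\PT(\eab)\bigr)_{a'b'}^2=n_1n_2\|\mtx{Z}\|_\infty^2\|\PT(\eab)\|_F^2\leq \mu_0 r(n_1+n_2)\|\mtx{Z}\|_\infty^2
\]
by~\eq{PTea-bound}. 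For the almost-sure bound on $|Y_k|$, the Frobenius estimate on $\PT(\eab)$ is too weak; instead I invoke the explicit identity
\[
(\PT(\eab))_{a_kb_k}=(\mtx{P}_U)_{aa_k}\delta_{bb_k}+\delta_{aa_k}(\mtx{P}_V)_{bb_k}-(\mtx{P}_U)_{aa_k}(\mtx{P}_V)_{bb_k}
\]
together with the Cauchy--Schwarz consequence $|(\mtx{P}_U)_{aa_k}|=|\langle\mtx{P}_U\vct{e}_a,\mtx{P}_U\vct{e}_{a_k}\rangle|\leq \mu_0 r/n_1$ (and its $\mtx{P}_V$ analogue) to conclude $|(\PT(\eab))_{a_kb_k}|\leq \mu_0 r(n_1+n_2)/(n_1n_2)$, so that $|Y_k|$ is bounded by a constant multiple of $\mu_0 r(n_1+n_2)\|\mtx{Z}\|_\infty$.

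Feeding $\rho_k^2\leq\mu_0 r(n_1+n_2)\|\mtx{Z}\|_\infty^2$ and this $M$ into the condensed form of Theorem~\ref{thm:bernstein} and setting the deviation $\tau$ equal to the bound claimed in the lemma makes the exponent exactly $-\beta\log n_2$, so the per-entry failure probability is at most $2n_2^{-\beta}$ once the applicability condition $M\tau\leq\sum_k\rho_k^2$ is verified; matching coefficients, the latter reduces to a condition of the form $m\geq c\,\mu_0 r(n_1+n_2)\log n_2$, and the hypothesis $m>\tfrac{8}{3}\beta\mu_0 r(n_1+n_2)\log n_2$ supplies exactly what is needed. A union bound over the $\leq n_2^2$ entries then yields the stated $2n_2^{2-\beta}$ total failure probability. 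The main technical obstacle is the almost-sure bound: replacing $|(\PT(\eab))_{a_kb_k}|$ by $\|\PT(\eab)\|_F$ would give $M=O(\sqrt{n_1n_2\mu_0 r(n_1+n_2)}\|\mtx{Z}\|_\infty)$ and force $m=\Omega(n_1n_2)$; only the finer pointwise control of $\PT(\eab)$ afforded by the entrywise coherence of $\mtx{P}_U$ and $\mtx{P}_V$ keeps the sampling requirement nearly optimal.
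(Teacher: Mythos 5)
Your proposal is correct and follows essentially the same route as the paper: control each entry of $\frac{n_1n_2}{m}\PT\RO(\mtx{Z})-\mtx{Z}$ as a sum of i.i.d.\ scalars via the $d_1=d_2=1$ case of Theorem~\ref{thm:bernstein}, with the same random variable, the same variance bound $\mu_0 r(n_1+n_2)\|\mtx{Z}\|_\infty^2$, and a union bound over the $n_1n_2\le n_2^2$ entries. One correction: the Frobenius estimate is not ``too weak'' for the almost-sure bound --- since $\PT$ is a self-adjoint idempotent, $(\PT(\eab))_{a_kb_k}=\langle \PT(\eabk),\PT(\eab)\rangle$, and Cauchy--Schwarz together with \eq{PTea-bound} applied to \emph{both} factors gives $|(\PT(\eab))_{a_kb_k}|\le \mu_0 r(n_1+n_2)/(n_1n_2)$ directly, which is how the stated constant $M=\mu_0 r(n_1+n_2)\|\mtx{Z}\|_\infty$ arises; your entrywise expansion also works but leaves a cross term $\mu_0^2r^2/(n_1n_2)$ that must be absorbed separately (e.g.\ using $\mu_0 r\le n_1$), which is why you only recover $M$ up to a constant.
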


\begin{proof}
	This lemma can be proven using the standard Bernstein Inequality.  For each matrix index $(c,d)$, sample $(a,b)$ uniformly at random to define the random variable $\xi_{cd} = \langle \ecd, n_1n_2 \langle \eab,\mtx{Z}\rangle\PT(\eab) - \mtx{Z} \rangle$.  We have $\E[\xi_{cd}]=0$, $|\xi_{cd}| \leq \mu_0 r (n_1+n_2) \|\mtx{Z}\|_\infty$, and
\begin{align*}
\E[\xi_{cd}^{2}] &=  \frac{1}{n_1n_2} \sum_{a,b} \langle \ecd, n_1n_2 \langle \eab,\mtx{Z}\rangle\PT(\eab) - \mtx{Z} \rangle^2\\
&=n_1n_2\sum_{a,b} \langle \PT(\ecd), \eab \rangle^2 \langle \eab,\mtx{Z}\rangle^2 - Z_{cd}^2\\
&\leq n_1n_2\|\PT(\ecd)\|^2_F \|\mtx{Z}\|_\infty^2\leq \mu_0 r (n_1+n_2) \|\mtx{Z}\|_\infty^2\,.
\end{align*}
Since the $(c,d)$ entry of $\frac{n_1n_2}{m}\PT\RO(\mtx{Z}) - \mtx{Z}$ is identically distributed to $\tfrac{1}{m}\sum_{k=1}^m \xi_{cd}^{(k)}$, where $\xi_{cd}^{(k)}$ are i.i.d. copies of $\xi_{cd}$, we have by Bernstein's Inequality and the union bound:
\[
	\Pr\left[  \left\|\frac{n_1n_2}{m}\PT\RO(\mtx{Z}) - \mtx{Z}\right\|_\infty > \sqrt{\frac{8\beta \mu_0 r (n_1+n_2) \log(n_2)}{3m}} \|\mtx{Z}\|_\infty \right] \leq 2n_1n_2 \exp(-\beta \log(n_2))\leq 2n_2^{2-\beta}\, .
\]
\end{proof}

\section{Proof of Theorem~\ref{thm:main}}\label{sec:main-result}

The proof follows the program developed in~\cite{Gross09} which itself adapted the strategy proposed in~\cite{CandesRecht08}.   The main idea is to approximate a dual feasible solution of~\eq{sdp} which certifies that $\mtx{M}$ is the unique minimum nuclear norm solution.   In~\cite{CandesRecht08} such a certificate was constructed via an infinite series using a construction developed in the compressed sensing literature~\cite{crt06,Fuchs04}.  The terms in this series were then analyzed individually using the decoupling inequalities of de la Pe\~{n}a and Montgomery-Smith~\cite{delaPena2}.  Truncating the infinite series after $4$ terms gave their result.  In~\cite{CandesTao09}, the authors bounded the contribution of $O(\log(n_2))$ terms in this series using intensive combinatorial analysis of each term.  The insight in~\cite{Gross09} was that, when sampling observations with replacement, a dual feasible solution could be closely approximated by a modified series where each term involved the product of independent random variables.   This change in the sampling model allows one to avoid decoupling inequalities and gives rise to the dramatic simplification here.

To proceed, recall again that by Proposition~\ref{prop:swr} it suffices to consider the scenario when the entries are sampled independently and uniformly with replacement.  I will first develop the main argument of the proof assuming many conditions hold with high probability. The proof is completed by subsequently bounding probability that all of these events hold.
Suppose that
\begin{equation}~\label{eq:big-set-isometry}
\frac{n_1n_2}{m} \left\|\PT \RO \PT - \frac{m}{n_1n_2}\PT\right\| \leq \frac{1}{2}\,, \quad\qquad \|\RO\| \leq \tfrac{8}{3} \beta^{1/2} \log(n_2)\,.
\end{equation}
Also suppose there exists a $\mtx{Y}$ in the range of $\RO$ such that
\begin{equation}\label{eq:quasi-multiplier}
	\|\PT(\mtx{Y})-\mtx{U}\mtx{V}^*\|_F \leq \sqrt{\frac{r}{2n_2}}\, , \quad\qquad \|\PTc(\mtx{Y})\| < \frac{1}{2}
\end{equation}

If~\eq{big-set-isometry} holds, then for any $\mtx{Z} \in \ker{\RO}$, $\PT(\mtx{Z})$ cannot be too large.  Indeed, we have
\[
	0 =\|\RO(\mtx{Z})\|_F \geq \|\RO\PT(\mtx{Z})\|_F - \|\RO\PTc(\mtx{Z})\|_F\,.
\]
Now observe that 
\[
 \|\RO\PT(\mtx{Z})\|_F^2 = \langle \mtx{Z}, \PT\RO^2\PT(\mtx{Z})\rangle \geq  \langle \mtx{Z}, \PT\RO\PT(\mtx{Z})\rangle \geq \frac{m}{2n_1n_2} \|\PT(\mtx{Z})\|_F^2
\]
and $\|\RO\PTc(\mtx{Z})\|_F\leq \tfrac{8}{3}\beta^{1/2}\log(n_2) \|\PTc(\mtx{Z})\|_F$.  Collecting these facts gives that for any $\mtx{Z} \in \ker{\RO}$,  
\[
\|\PTc(\mtx{Z})\|_F \geq  \sqrt{\frac{9m}{128 \beta n_1n_2\log^2(n_2)}} \|\PT(\mtx{Z})\|_F>\sqrt{\frac{2r}{n_2}} \|\PT(\mtx{Z})\|_F\,.
\]  
Now recall that $\|\mtx{A}\|_* = \sup_{\|B\|\leq 1} \langle \mtx{A},\mtx{B}\rangle$.  For $\mtx{Z}\in \ker \RO$, pick $\mtx{U}_\perp$ and $\mtx{V}_\perp$ such that $[\mtx{U},\mtx{U}_\perp]$ and $[\mtx{V},\mtx{V}_\perp]$ are unitary matrices and that $\langle \mtx{U}_\perp\mtx{V}_\perp^*, \PTc(\mtx{Z}) \rangle = \|\PTc(\mtx{Z})\|_*$.   Then it follows that 
\begin{align*}
	\|\mtx{M} + \mtx{Z}\|_* &\geq \langle \mtx{U}\mtx{V}^* + \mtx{U}_\perp \mtx{V}_\perp^*,
	\mtx{M} + \mtx{Z} \rangle\\
	&= \|\mtx{M}\|_* + \langle \mtx{U}\mtx{V}^* + \mtx{U}_\perp\mtx{V}_\perp^*, \mtx{Z} \rangle\\
	&=  \|\mtx{M}\|_* + \langle \mtx{U}\mtx{V}^* -\PT(\mtx{Y}), \PT( \mtx{Z} )\rangle 
		+\langle \mtx{U}_\perp \mtx{V}_\perp^*-\PTc(\mtx{Y}), \PTc(\mtx{Z}) \rangle\\
		& > \|\mtx{M}\|_* -  \sqrt{\frac{r}{2n_2}}\|\PT(\mtx{Z})\|_F +  \frac{1}{2}\|\PTc(\mtx{Z})\|_*
			\geq \|\mtx{M}\|_*\,.
\end{align*}
The first inequality holds from the variational characterization of the nuclear norm.  We also used the fact that $\langle \mtx{Y},\mtx{Z} \rangle =0$ for all $\mtx{Z}\in\ker{\RO}$.  Thus, if a $\mtx{Y}$ exists obeying~\eq{quasi-multiplier}, we have that for any $\mtx{X}$ obeying $\RO(\mtx{X}-\mtx{M})=\mtx{0}$, $\|\mtx{X}\|_* > \|\mtx{M}\|_*$.  That is, any if $\mtx{X}$ has $M_{ab}=X_{ab}$ for all $(a,b)\in\Omega$, $\mtx{X}$ has strictly larger nuclear norm than $\mtx{M}$, and hence $\mtx{M}$ is the unique minimizer of~\eq{sdp}.  The remainder of the proof shows that such a $\mtx{Y}$ exists with high probability.

To this end, partition $1,\ldots, m$ into $p$ partitions of size $q$. By assumption, we may choose
\[
q\geq\frac{128}{3} \max\{\mu_0,\mu_1^2\} r (n_1+n_2) \beta \log(n_1+n_2)\quad \mbox{and} \quad p\geq\frac{3}{4} \log(2n_2)\,.
\]
 Let $\Omega_j$ denote the set of indices corresponding to the $j$th partition.  Note that each of these partitions are independent of one another when the indices are sampled with replacement.  Assume that 
\begin{equation}~\label{eq:small-set-isometry}
\frac{n_1n_2}{q} \left\|\PT \ROs{k} \PT - \frac{q}{n_1n_2}\PT\right\| \leq \frac{1}{2}
\end{equation}
for all $k$. Define $\mtx{W}_0 = \mtx{U}\mtx{V}^*$ and set $\mtx{Y}_k = \frac{n_1n_2}{q}\sum_{j=1}^k \ROs{j}(\mtx{W}_{j-1})$, $\mtx{W}_k = \mtx{U}\mtx{V}^* - \PT(\mtx{Y}_k)$ for $k=1,\ldots, p$.  Then
\[
	\|\mtx{W}_k\|_F = \left\|\mtx{W}_{k-1} - \frac{n_1n_2}{q}\PT\ROs{k}(\mtx{W}_{k-1})\right\|_F 
	= \left\|(\PT-\frac{n_1n_2}{q}\PT\ROs{k}\PT)(\mtx{W}_{k-1})\right\|_F
	\leq \frac{1}{2} \|\mtx{W}_{k-1}\|_F\,,
\]
and it follows that $\|\mtx{W}_k\|_F \leq 2^{-k}\|\mtx{W}_0\|_F = 2^{-k}\sqrt{r}$. Since $p\geq\tfrac{3}{4} \log(2n_2) \geq \tfrac{1}{2}\log_2(2n_2) = \log_2 \sqrt{2n_2}$, then $\mtx{Y}=\mtx{Y}_p$ will satisfy the first inequality of~\eq{quasi-multiplier}.  Also suppose that
\begin{align}\label{eq:w-norm-shrink1}
\left\|\mtx{W}_{k-1} - \frac{n_1n_2}{q}\PT\ROs{k}(\mtx{W}_{k-1})\right\|_\infty 
	&\leq \frac{1}{2} \|\mtx{W}_{k-1}\|_\infty\\
	\left\|\left(\frac{n_1n_2}{q}\ROs{j}-\OpId\right)(\mtx{W}_{j-1})\right\| &\leq \sqrt{\frac{8 n_1 n_2^2 \beta  \log n_2}{3q}} \|\mtx{W}_{j-1}\|_\infty
	\label{eq:w-norm-shrink2}
\end{align}
for $k=1,\ldots,p$.

To see that $\|\PTc(\mtx{Y}_p)\|\leq \frac{1}{2}$ when~\eq{w-norm-shrink1} and~\eq{w-norm-shrink2} hold, observe
$\|\mtx{W}_k\|_\infty \leq 2^{-k} \|\mtx{U}\mtx{V}^*\|_\infty$, and it follows that
\begin{align*}
	\|\PTc \mtx{Y}_p\| &\leq \sum_{j=1}^p \|\tfrac{n_1n_2}{q}\PTc \ROs{j}\mtx{W}_{j-1}\|\\
	 &=  \sum_{j=1}^p \|\PTc( \tfrac{n_1n_2}{q}\ROs{j}\mtx{W}_{j-1}-\mtx{W}_{j-1})\|\\
	 &\leq  \sum_{j=1}^p \|(\tfrac{n_1n_2}{q} \ROs{j} -\OpId)(\mtx{W}_{j-1})\|\\
	 &\leq  \sum_{j=1}^p  \sqrt{\frac{8n_1n_2^2\, \beta\log n_2}{3q}}\| \mtx{W}_{j-1}\|_\infty\\
	  &=  2\sum_{j=1}^p 2^{-j} \sqrt{\frac{8n_1n_2^2 \, \beta \log n_2}{3q}}\|\mtx{U}\mtx{V}^*\|_\infty < 
	  \sqrt{\frac{32\mu_1^2 r n_2 \, \beta\log n_2}{3q}}<1/2
\end{align*}
since $q> \tfrac{128}{3} \mu_1^2 r n_2 \beta \log(n_2)$.  The first inequality follows from the triangle inequality.  The second line follows because $\mtx{W}_{j-1}\in T$ for all $j$.  The third line follows because, for any $\mtx{Z}$, 
\[
\|\PTc(\mtx{Z})\| = \|(\Id_{n_1} - \mtx{P}_{U})\mtx{Z}(\Id_{n_2} - \mtx{P}_{V})\| \leq \|\mtx{Z}\|\,.
\]
The fourth line applies~\eq{w-norm-shrink2}.  The  next line follows from~\eq{w-norm-shrink1}.  The final line follows from the assumption {\bf A1}.

All that remains is to bound the probability that all of the invoked events hold.  With $m$ satisfying the bound in the main theorem statement, the first inequality in \eq{big-set-isometry}~fails to hold with probability at most $2n_2^{2-2\beta}$ by Theorem~\ref{thm:near-isometry}, and the second inequality fails to hold with probability at most $n_2^{2-2\beta^{1/2}}$ by Proposition~\ref{prop:duplicate-count}.  For all $k$, \eq{small-set-isometry} fails to hold with probability at most $2n_2^{2-2\beta}$, \eq{w-norm-shrink1} fails to hold with probability at most $2n_2^{2-2\beta}$, and \eq{w-norm-shrink2} fails to hold with probability at most $(n_1+n_2)^{1-2\beta}$.  Summing these all together, all of the events hold with probability at least 
\[
	1-6\log(n_2) (n_1+n_2)^{2-2\beta}-n_2^{2-2\beta^{1/2}}
\] by the union bound.  This completes the proof.

\section{Discussion and Conclusions}\label{sec:conclusions}

The results proven here are nearly optimal, but small improvements can possibly be made.  The numerical constant $32$ in the statement of the theorem may be reducible by more clever bookkeeping, and it may be possible to derive a linear dependence on the logarithm of the matrix dimensions.  But further reduction is not possible because of the necessary conditions provided by Cand\`es and Tao.  One minor improvement that could be made would be to remove the assumption {\bf A1}.  For instance, while $\mu_1$ is known to be small in most of the models of low rank matrices that have been analyzed, no one has shown that an assumption of the form {\bf A1}  is necessary for completion.  Nonetheless, all prior results on matrix completion have imposed an assumption like {\bf A1}~\cite{CandesRecht08, CandesTao09,Keshavan09}, and it would be interesting to see if it can be removed as a requirement, or if it is somehow necessary. 

\if{0}
While it was known that the nuclear norm problem could be efficiently solved by semidefinite programming~\cite{FazelThesis, Fazel01, Vandenberghe96}, the results of~\cite{Recht07,CandesRecht08, CandesTao09, Keshavan09} have inspired the development of many special purpose algorithms to rapidly minimize the nuclear norm~\cite{Cai08,Ji09, Liu08, Ma08}.  
\fi

Surprisingly, the simplicity of the argument presented here mostly arises from the abandonment of Bernoulli sampling in favor of sampling with replacement.  It would be of interest to review results investigating noise robustness of matrix completion~\cite{CandesPlan09,Keshavan09b} or deconvolution of sparse and low rank matrices~\cite{Chandrasekaran09} to see if results can be improved by appealing to sampling with replacement.  Furthermore, since much of the work on rank minimization and matrix completion borrows tools from the compressed sensing community, it is of interest to revisit this related body of work and to see if proofs can be simplified or bounds can be improved there as well.  The noncommutative versions of Chernoff and Bernstein' s Inequalities may be useful throughout machine learning and statistical signal processing, and a fruitful line of inquiry would examine how to apply these tools from quantum information to the study of classical signals and systems.

\subsection*{Acknowledgments}
B.R. would like to thank Aram Harrow for introducing him to the operator Chernoff bound and many helpful clarifying conversations, Silvia Gandy for pointing out several typos in the original version of this manuscript, and Rob Nowak, Ali Rahimi, and Stephen Wright for many fruitful discussions about this paper.

\begin{small}
\bibliographystyle{abbrv}
\bibliography{/Users/brecht/Documents/LaTeX/brecht}
\end{small}

\appendix
\section{Operator Chernoff Bounds}

In this section, I present a proof of~\ref{thm:bernstein}, and also provide new proofs of some probability bounds from quantum information theory.  To review, a symmetric matrix $\mtx{A}$ is positive semidefinite if all of its eigenvalues are nonnegative. If $\mtx{A}$ and $\mtx{B}$ are positive semidefinite matrices, $\mtx{A}\preceq \mtx{B}$ means $\mtx{B}-\mtx{A}$ is positive semidefinite.  For square matrices $\mtx{A}$, the matrix exponential will be denoted $\exp(\mtx{A})$ and is given by the power series
\[
	\exp(\mtx{A}) = \sum_{k=0}^\infty \frac{\mtx{A}^k}{k!}
\]

The following theorem is a generalization of Markov's inequality originally proven in~\cite{AhlswedeWinter02}.  My proof closely follows the standard proof of the traditional Markov inequality, and does not rely on discrete summations.
\begin{theorem}[Operator Markov Inequality~\cite{AhlswedeWinter02}]\label{thm:markov}
	Let $\mtx{X}$ be a random positive semidefinite matrix and $\mtx{A}$ a fixed positive definite matrix.  Then
\[
		\P\left[ \mtx{X} \not\preceq \mtx{A}\right] \leq \trace(\E[\mtx{X}]\mtx{A}^{-1})
\]
\end{theorem}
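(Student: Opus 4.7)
The plan is to adapt the standard single-line proof of scalar Markov's inequality, namely $\mathbf{1}\{X \geq a\} \leq X/a$ followed by taking expectations, to the operator setting. The key observation that will replace the scalar comparison is a trace bound: for any positive semidefinite matrix $\mtx{Y}$, whenever $\mtx{Y} \not\preceq \mtx{I}$, the largest eigenvalue of $\mtx{Y}$ exceeds $1$, and since the trace of a PSD matrix is the sum of its nonnegative eigenvalues, the trace dominates the largest eigenvalue and hence also exceeds $1$ on that event.

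First I would reduce to the identity case by conjugation. Since $\mtx{A}$ is positive definite, $\mtx{A}^{-1/2}$ is well defined, and the matrix $\mtx{Y} = \mtx{A}^{-1/2} \mtx{X} \mtx{A}^{-1/2}$ is positive semidefinite almost surely. The statement $\mtx{X} \preceq \mtx{A}$ is equivalent to $\mtx{A} - \mtx{X} \succeq 0$, and conjugating by $\mtx{A}^{-1/2}$ on both sides (which preserves the PSD order since $\mtx{A}^{-1/2}$ is itself PSD) shows this is the same as $\mtx{Y} \preceq \mtx{I}$. Thus the event $\{\mtx{X} \not\preceq \mtx{A}\}$ coincides with $\{\mtx{Y} \not\preceq \mtx{I}\}$.

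Next I would establish the pointwise inequality
\[
  \mathbf{1}\{\mtx{Y} \not\preceq \mtx{I}\} \;\leq\; \trace(\mtx{Y}).
\]
Off the event the left side is $0$ and the right side is nonnegative (as $\mtx{Y} \succeq 0$); on the event the left side equals $1$ while, by the PSD trace-versus-max-eigenvalue comparison mentioned above, the right side strictly exceeds $1$. Taking expectations, exploiting linearity of trace, and using the cyclic identity $\trace(\mtx{A}^{-1/2} \E[\mtx{X}] \mtx{A}^{-1/2}) = \trace(\E[\mtx{X}] \mtx{A}^{-1})$ then yields the claimed bound.

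There is no serious obstacle; the entire content of the argument is the pointwise domination in the displayed inequality, which is just the observation that a PSD matrix fails to be bounded by $\mtx{I}$ only when at least one of its (nonnegative) eigenvalues exceeds $1$, in which case the sum of its eigenvalues does as well. Everything else is the routine manipulation of trace and the observation that conjugation by $\mtx{A}^{-1/2}$ is order-preserving on the cone of PSD matrices. This is exactly why the proof stays in the spirit of scalar Markov and avoids any expansion over eigenvalues or over matrix exponentials.
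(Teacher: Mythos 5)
Your proposal is correct and follows essentially the same route as the paper: conjugate by $\mtx{A}^{-1/2}$ to reduce to the identity case, dominate the indicator of $\{\mtx{Y}\not\preceq\mtx{I}\}$ pointwise by $\trace(\mtx{Y})$ using the fact that the trace of a PSD matrix dominates its largest eigenvalue, and finish by linearity and cyclicity of the trace. The only difference is cosmetic (the paper phrases the eigenvalue comparison via the operator norm rather than the largest eigenvalue, which is the same thing for PSD matrices).
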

\begin{proof}
Note that if $\mtx{X}\not\preceq\mtx{A}$, then $\mtx{A}^{-1/2}\mtx{X}\mtx{A}^{-1/2} \not\preceq \mtx{I}$, and hence $\|\mtx{A}^{-1/2}\mtx{X}\mtx{A}^{-1/2}\|>1$.  Let $I_{\mtx{X}\not\preceq\mtx{A}}$ denote the indicator of the event $\mtx{X}\not\preceq\mtx{A}$.  Then $I_{\mtx{X}\not\preceq\mtx{A}} \leq \trace(\mtx{A}^{-1/2}\mtx{X}\mtx{A}^{-1/2})$ as the right hand side is always nonnegative, and, if the left hand side equals $1$, the trace of the right hand side must exceed the norm of the right hand side which is greater than $1$. Thus we have
\[
\P[\mtx{X}\not\preceq\mtx{A}] = \E[I_{\mtx{X}\not\preceq\mtx{A}}] 
\leq\E[\trace(\mtx{A}^{-1/2}\mtx{X}\mtx{A}^{-1/2})]
= \trace(\E[\mtx{X}]\mtx{A}^{-1})\,.
\]
where the last equality follows from the linearity and cyclic properties of the trace.
\end{proof}

Next I will derive a noncommutative version of the Chernoff bound.  This was also proven in ~\cite{AhlswedeWinter02} for i.i.d. matrices. The version stated here is more general in that the random matrices need not be identically distributed, but the proof is essentially the same.  

\begin{theorem}[Noncommutative Chernoff Bound]\label{thm:chernoff}
Let $\mtx{X}_1,\ldots,\mtx{X}_n$ be independent symmetric random matrices in $\R^{d\times d}$.  Let $\mtx{A}$ be an arbitrary symmetric matrix.  Then for any invertible $d\times d$ matrix $\mtx{T}$
\[
	\P\left[ \sum_{k=1}^n \mtx{X}_k \not\preceq n\mtx{A} \right] \leq d \prod_{k=1}^n\left\| \E[\exp(
	\mtx{T}\mtx{X_k}\mtx{T}^* - \mtx{T}\mtx{A}\mtx{T}^*)] \right\|
\]
 \end{theorem}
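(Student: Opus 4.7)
The plan is to reduce the question to a scalar-style Chernoff argument on the matrix exponential, apply the operator Markov inequality just proven in Theorem~\ref{thm:markov}, and then peel off one summand at a time using the Golden--Thompson inequality.

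First I would use invertibility of $\mtx{T}$ to absorb the conjugating factor. Set $\mtx{Y}_k = \mtx{T}\mtx{X}_k\mtx{T}^\adj - \mtx{T}\mtx{A}\mtx{T}^\adj$ and $\mtx{S} = \sum_{k=1}^n \mtx{Y}_k$. Because conjugation by an invertible matrix preserves positive semidefiniteness, the event $\sum_k \mtx{X}_k \not\preceq n\mtx{A}$ is equivalent to $\mtx{S}\not\preceq \mtx{0}$. Since $\mtx{S}$ is symmetric and $\exp$ maps its eigenvalues to $e^{\lambda_i(\mtx{S})}$, this in turn is equivalent to $\exp(\mtx{S})\not\preceq \Id$. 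Applying Theorem~\ref{thm:markov} to the random PSD matrix $\exp(\mtx{S})$ with $\Id$ in the role of the dominating matrix gives
\[
\P\!\left[\textstyle\sum_{k=1}^n \mtx{X}_k \not\preceq n\mtx{A}\right] = \P[\exp(\mtx{S})\not\preceq \Id] \leq \trace\bigl(\E[\exp(\mtx{S})]\bigr),
\]
so the problem reduces to controlling $\trace\E\exp\bigl(\sum_k \mtx{Y}_k\bigr)$.

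The engine of the remaining argument is the Golden--Thompson inequality $\trace\exp(\mtx{P}+\mtx{Q})\leq \trace(\exp(\mtx{P})\exp(\mtx{Q}))$ for symmetric $\mtx{P},\mtx{Q}$, iterated via conditional expectations. Conditioning on $\mtx{Y}_1,\ldots,\mtx{Y}_{n-1}$ and integrating out $\mtx{Y}_n$,
\begin{align*}
\E\trace\exp\!\Bigl(\textstyle\sum_{k=1}^{n}\mtx{Y}_k\Bigr)
&\leq \E\trace\!\Bigl(\exp\!\bigl(\textstyle\sum_{k=1}^{n-1}\mtx{Y}_k\bigr)\exp(\mtx{Y}_n)\Bigr) \\
&= \E\trace\!\Bigl(\exp\!\bigl(\textstyle\sum_{k=1}^{n-1}\mtx{Y}_k\bigr)\,\E[\exp(\mtx{Y}_n)]\Bigr) \\
&\leq \|\E[\exp(\mtx{Y}_n)]\|\cdot\E\trace\exp\!\Bigl(\textstyle\sum_{k=1}^{n-1}\mtx{Y}_k\Bigr),
\end{align*}
where the second line uses independence together with the linearity of trace and expectation, and the third uses the fact that for PSD matrices $\mtx{B},\mtx{C}$ one has $\trace(\mtx{B}\mtx{C}) = \trace(\mtx{B}^{1/2}\mtx{C}\mtx{B}^{1/2})\leq \|\mtx{C}\|\trace(\mtx{B})$ (a one-line consequence of $\mtx{B}^{1/2}\mtx{C}\mtx{B}^{1/2}\preceq \|\mtx{C}\|\mtx{B}$). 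Iterating this peeling $n$ times and using $\trace(\Id) = d$ at the end yields the claimed product bound.

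The main obstacle is the non-commutativity of the summands: Golden--Thompson does not generalize to three or more matrices, so one cannot collapse $\trace\exp(\mtx{S})$ into a trace of a product in a single step. The inductive conditioning above sidesteps this by applying the two-matrix form of Golden--Thompson at each stage. The invertible matrix $\mtx{T}$ enters only as a free parameter carried along for later optimization, playing the role of the tilting parameter $t$ in the scalar Chernoff recipe when this theorem is specialized to derive Theorem~\ref{thm:bernstein}.
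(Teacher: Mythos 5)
Your proposal is correct and follows essentially the same route as the paper: conjugate by $\mtx{T}$, exponentiate, apply the operator Markov inequality with $\Id$ as the dominating matrix, and then peel off one summand at a time via Golden--Thompson, independence, and the bound $\trace(\mtx{B}\mtx{C})\leq\|\mtx{C}\|\trace(\mtx{B})$ for positive semidefinite matrices. The only cosmetic difference is that you iterate the peeling all $n$ times and finish with $\trace(\Id)=d$, whereas the paper stops after $n-1$ steps and bounds the last trace by $d$ times the norm; both yield the identical bound.
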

 \begin{proof}
The proof relies on an estimate from statistical physics which is stated here without proof.
\begin{lemma}[Golden-Thompson inequality~\cite{Golden65,Thompson65}]
For any symmetric matrices $\mtx{A}$ and $\mtx{B}$, 
\[
\trace(\exp(\mtx{A}+\mtx{B})) \leq \trace((\exp \mtx{A}) (\exp \mtx{B}))\,.
\]
\end{lemma}
Much like the proof of the standard Chernoff bound, the theorem now follows from a long chain of inequalities.  
 \begin{align*}
 	\P\left[ \sum_{k=1}^n \mtx{X}_k \not\preceq n\mtx{A} \right]&=
 	\P\left[ \sum_{k=1}^n (\mtx{X}_k - \mtx{A})\not\preceq 0 \right]\\
	&= \P\left[ \sum_{k=1}^n\mtx{T} (\mtx{X}_k-\mtx{A}) \mtx{T}^*  \not\preceq 0 \right]\\
	&= \P\left[ \exp\left(\sum_{k=1}^n\mtx{T}( \mtx{X}_k-\mtx{A}) \mtx{T}^* \right)\not\preceq \mtx{I}_d \right]\\
	&\leq \trace\left( \E\left[ \exp\left(\sum_{k=1}^n\mtx{T}( \mtx{X}_k-\mtx{A}) \mtx{T}^*\right) \right] \right)\\
	&= \E\left[ \trace\left(  \exp\left(\sum_{k=1}^n\mtx{T}(\mtx{X}_k - \mtx{A}) \mtx{T}^*  \right) \right)  \right]\\
	&\leq \E\left[ \trace\left(  \exp\left(\sum_{k=1}^{n-1}\mtx{T}(\mtx{X}_k - \mtx{A}) \mtx{T}^*  \right) 
	\exp\left(\mtx{T}(\mtx{X}_n - \mtx{A}) \mtx{T}^*  \right) 
	\right)  \right]\\
	&\leq \E_{1,\ldots,{n-1}}\left[ \trace\left(  \exp\left(\sum_{k=1}^{n-1}\mtx{T}(\mtx{X}_k - \mtx{A}) \mtx{T}^*  \right) 
	\E[\exp\left(\mtx{T}(\mtx{X}_n - \mtx{A}) \mtx{T}^*  \right)]
	\right)  \right]\\
	&\leq \left\|\E[\exp\left(\mtx{T}(\mtx{X}_n - \mtx{A}) \mtx{T}^*  \right)]\right\|\E_{1,\ldots,{n-1}}\left[ \trace\left(  \exp\left(\sum_{k=1}^{n-1}\mtx{T}(\mtx{X}_k - \mtx{A}) \mtx{T}^*  \right) \right)  \right]\\
	&\leq \prod_{k=2}^n\left\|\E[\exp\left(\mtx{T}(\mtx{X}_k - \mtx{A}) \mtx{T}^*  \right)]\right\|\E\left[ \trace\left(  \exp\left(\mtx{T}(\mtx{X}_1 - \mtx{A}) \mtx{T}^*  \right) \right)  \right]\\
		&\leq d\prod_{k=1}^n\left\|\E[\exp\left(\mtx{T}(\mtx{X}_k - \mtx{A}) \mtx{T}^*  \right)]\right\|
 \end{align*}
Here, the first three lines follow from standard properties of the semidefinite ordering.  The fourth line invokes the Operator Markov Inequality.  The sixth line follows from the Golden-Thompson inequality. The seventh line follows from independence of the $\mtx{X}_k$.  The eighth line follows because for positive definite matrices $\trace(\mtx{A}\mtx{B}) \leq \trace(\mtx{A})\|\mtx{B}\|$.  This is just another statement of the duality between the nuclear and operator norms.  The ninth line iteratively repeats the previous two steps.  The final line follows because for a positive definite matrix $\mtx{A}$, $\trace(\mtx{A})$ is the sum of the eigenvalues of $\mtx{A}$, and all of the eigenvalues are at most $\|\mtx{A}\|$.
 \end{proof}

Let us now turn to proving the Noncommutative Bernstein Inequality presented in Section~\ref{sec:swr}. The authors in~\cite{Gross09} proposed a similar inequality for symmetric i.i.d. random matrices with a slightly worse constant.  The proof here is more general and follows the standard derivation of Bernstein's inequality.

\begin{proof}[of Theorem~\ref{thm:bernstein}]
Set 
\[
	\mtx{Y}_k = \left[\begin{array}{cc} \mtx{0} & \mtx{X}_k\\ \mtx{X}_k^* & \mtx{0}\end{array}\right]
\]
Then $\mtx{Y}_k$ are symmetric random variables, and for all $k$
\[
	\|\E[\mtx{Y}_k^2]\| = \left\| \E\left[ \left[ \begin{array}{cc} \mtx{X}_k\mtx{X}_k^* & \mtx{0} \\
	\mtx{0} & \mtx{X}_k^*\mtx{X}_k \end{array}\right]\right] \right\| = \max \{
	\| \E[  \mtx{X}_k\mtx{X}_k^*]\|, \| \E[\mtx{X}_k^*\mtx{X}_k]\|\} = \rho_k^2\,.
\]
Moreover, the maximum singular value of $\sum_{k=1}^L \mtx{X}_k$ is equal to the maximum eigenvalue of $\sum_{k=1}^L \mtx{Y}_k$. By Theorem~\ref{thm:chernoff}, we have for all $\lambda>0$
\[
	\P\left[\left\| \sum_{k=1}^L \mtx{X}_k \right\| > Lt \right] =
	\P\left[ \sum_{k=1}^L \mtx{Y}_k \not\preceq Lt\mtx{I} \right] \leq (d_1+d_2)\exp(-L\lambda t) \prod_{k=1}^L	\left\| \E[\exp(\lambda\mtx{Y_k})] \right\|\,.
\]

For each $k$, let $\mtx{Y}_k = \mtx{U}_k \mtx{\Lambda}_k \mtx{U}_k^*$ be an eigenvalue decomposition, where $\mtx{\Lambda}_k$ is the diagonal matrix of the eigenvalues of $\mtx{Y}_k$.  In turn, it follows that for $s>0$
\[
-M^s\mtx{Y}_k^2 \preceq -\mtx{U}_k M^s\mtx{\Lambda}_k^2 \mtx{U}_k^* \preceq
\mtx{U}_k \mtx{\Lambda}_k^{2+s} \mtx{U}_k^* = \mtx{Y}_k^{2+s} \preceq \mtx{U}_k M^s\mtx{\Lambda}_k^2 \mtx{U}_k^* \preceq M^s \mtx{Y}_k^2\,,
\]
which then implies
\begin{equation}\label{eq:pow-norm-bound}
	\|\E[\mtx{Y}_k^{s+2}]\|\leq M^s\|\E[\mtx{Y}_k^2]\|\,.
\end{equation}

For fixed $k$, we have
\begin{align*}
	\|\E[\exp(\lambda \mtx{Y}_k)]\| 
	&\leq \|\mtx{I}\| + \sum_{j=2}^\infty \frac{\lambda^j}{j!}\|\E[\mtx{Y}_k^j]\|\\
	&\leq  1 + \sum_{j=2}^\infty \frac{\lambda^j}{j!}\|\E[\mtx{Y}_k^2]\|M^{j-2}\\
	&=  1 + \frac{\rho_k^2}{M^2}\sum_{j=2}^\infty \frac{\lambda^j}{j!}M^{j}
	=  1 + \frac{\rho_k^2}{M^2}(\exp(\lambda M)-1 - \lambda M)\\
	&\leq  \exp\left(\frac{\rho_k^2}{M^2}(\exp(\lambda M)-1 - \lambda M)\right)\,.
\end{align*}

The first inequality follows from the triangle inequality and the fact that $\E[\mtx{Y}_k]=0$, the second inequality follows from (\ref{eq:pow-norm-bound}), and the final inequality follows from the fact that $1+x \leq \exp(x)$ for all $x$. Putting this together gives
\[
	\P\left[\left\| \sum_{k=1}^L \mtx{X}_k \right\| > Lt \right]  \leq (d_1+d_2) \exp
	\left(-\lambda Lt  + \frac{\sum_{k=1}^L\rho_k^2}{M^2}(\exp(\lambda M)-1 - \lambda M)\right)\,.
	\]
This final expression is now just a real number, and only has to be minimized as a function of $\lambda$.  The theorem now follows by algebraic manipulation: the right hand side is minimized by setting $\lambda = \frac{1}{M}\log(1+\frac{tLM}{\sum_{k=1}^L \rho_k^2})$, then basic approximations can be employed to complete the argument (see, for example~\cite{MIT18465}, lectures 4 and 5).
\end{proof}

\end{document}